\newtheorem{thm}{Theorem}[section]
\newtheorem{defn}[thm]{Definition}
\newcommand{\abs}[1]{\left| #1 \right|} 
\newcommand{\ket}[1]{| #1 \rangle } 
\newcommand{\bra}[1]{\langle  #1 |} 
\newcommand{\braket}[2]{\langle  #1 \vphantom{#2} | #2 \vphantom{#1} \rangle } 
\newcommand{\diracprod}[3]{\left\langle  #1 \vphantom{#2#3} \right|#2 \left| #3 \vphantom{#1#2} \right\rangle } 
\let\baraccent=\= 
\renewcommand{\=}[1]{\stackrel{#1}{=}} 
\newcommand{\im}{\text{Im}}
\def\A{\mathcal{A}}
\def\B{\mathcal{B}}
\def\N{\mathcal{N}}
\def\O{\mathcal{O}}
\def\P{\mathcal{P}}
\def\L{\mathcal{L}}
\def\H{\mathcal{H}}
\def\C{\mathbb{C}}
\def\Tr{\text{Tr}}
\def\Srel{S_{\text{rel}}}
\title{Adding the algebraic Ryu-Takayanagi formula to the algebraic reconstruction theorem}
\author[a]{Mingshuai Xu}
\author[a]{Haocheng Zhong}
\affiliation[a]{Shing-Tung Yau Center and School of Physics, Southeast University, Nanjing 210096, China}
\emailAdd{xumingshuai@seu.edu.cn, zhonghaocheng327@gmail.com}
\abstract{A huge progress in studying holographic theories is that holography can be interpreted via the quantum error correction, which makes equal the entanglement wedge reconstruction, the Jafferis-Lewkowycz-Maldacena-Suh formula, the radial commutativity and the Ryu-Takayanagi formula. We call the equivalence the reconstruction theorem, whose infinite-dimensional generalization via algebraic language was believed to exclude the algebraic version of the Ryu-Takayanagi formula. However, recent developments regarding gravitational algebras have shown that the inclusion of the algebraic Ryu-Takayanagi formula is plausible. In this letter, we prove that such inclusion holds for the cases of type I/II factors, which are expected to describe holographic theories.}
\begin{document}
\maketitle
\flushbottom

\section{Introduction}

One of the milestones in understanding quantum gravity in the last few decades is the AdS/CFT correspondence \cite{Maldacena:1997re,Witten:1998qj,Gubser:1998bc}, which conjectures an equivalence between $d-$dimensional quantum gravity on anti-de Sitter (AdS$_d$) spacetime and $(d-1)-$dimensional conformal field theory (CFT$_{d-1}$). Practically, the dual CFT$_{d-1}$ holographically lives at the boundary of the AdS$_d$ which is hence denoted as the bulk theory. The equivalence between the boundary theory and the bulk theory is realized as several statements about correspondences between quantities in the respective theories, which can be summarized as follows,
\begin{itemize}
	\item Entanglement wedge reconstruction (or subregion duality) \cite{Czech:2012bh,Hamilton:2006az,Morrison:2014jha,Bousso:2012sj,Bousso:2012mh,Hubeny:2012wa,Wall:2012uf,Headrick:2014cta,Jafferis:2015del,Dong:2016eik}: given a subregion of the boundary, one is able to reconstruct the entanglement wedge of the boundary subregion. To be specific, any bulk operator inside the entanglement wedge can be reconstructed via the information on the boundary subregion.
	
	\item  Jafferis-Lewkowycz-Maldacena-Suh (JLMS) formula \cite{Jafferis:2015del}: given two density operators (or density matrices) $\rho_{A},\sigma_A$ restricted on the boundary subregion $A$ whose entanglement wedge in the bulk is denoted by $a$, we have 
	\begin{equation}
		\Srel(\rho_A|\sigma_A)=\Srel(\tilde{\rho}_a|\tilde{\sigma}_a)
	\end{equation}
	where $\Srel$ is the quantum relative entropy and $\tilde{\rho}_a,\tilde{\sigma}_a$ are density operators in $a$ which dual to $\rho_{A},\sigma_A$ respectively.

	\item Radial commutativity \cite{Polchinski:1999yd,
		Almheiri_2015,Harlow:2018fse}: any bulk operator at a bulk time-slice should commute with all boundary operators localized at the boundary of that time-slice.
	
	\item Ryu-Takayanagi (RT) formula \cite{Ryu:2006bv,Hubeny:2007xt,Casini_2011,
		Lewkowycz:2013nqa,Nishioka:2018khk,Faulkner:2013ana,Engelhardt:2014gca}\footnote{Some refer to the Ryu-Takayanagi (RT) or Hubeny-Rangamani-Takayanagi (HRT) formula as the one without bulk correction, and call \eqref{eq:RT} the quantum extremal surface (QES) formula. Accordingly, the RT/HRT surface is also called the quantum extremal surface.}: given a density operator $\rho_{A}$ on the boundary subregion $A$ and the dual density operator $\tilde{\rho}_a$ in the entanglement wedge $a$, their von Neumann entropies satisfy
	\begin{equation}\label{eq:RT}
		S(\rho_A)=\mathcal{L}_A+S(\tilde{\rho}_a)
	\end{equation}
	where $\mathcal{L}_A$ is the area of the Ryu-Takayanagi surface of $A$ over $4G$.
\end{itemize}
Historically, these topics have been developed rather independently. However, one major progress in understanding holographic theories is that the above statements can be put on an equal footing by using the language of quantum error correction (QEC) \cite{Grassl:1996eh}. Furthermore, the equivalence between the above statements is independent of the specific details of the holographic theories. In this paper, we call the equivalence the \emph{reconstruction theorem}. Early proof of the reconstruction theorem assumes the holographic models are finite-dimensional \cite{Almheiri_2015,Dong:2016eik,Harlow:2016vwg}\footnote{See also appendix A of \cite{Zhong:2024fmn} for a summary proof, and \cite{Faulkner:2020hzi,Akers:2021fut,Gesteau:2023hbq} for an incomplete list of related developments.}, which is later generalized into infinite-dimensional cases \cite{Kamal:2019skn,Kang:2018xqy}.

In studying infinite-dimensional quantum theories, von Neumann algebra serves as a natural mathematical tools. Especially for quantum field theory, von Neumann algebra provides an axiomatic way to formalize quantum field theory, whose mathematical framework is called algebraic quantum field theory (AQFT).  In
AQFT, one focus on observables rather than quantum states, and observables of a quantum system form a von Neumann algebra after assuming some physical conditions like causality, Poincar\'e symmetry, etc. We call the algebraic generalization of the reconstruction theorem by using the language of von Neumann algebra the \emph{algebraic reconstruction theorem}.

Unlike the quantum relative entropy, the von Neumann entropy is ill-defined in infinite-dimensional cases, which further implies that the algebraic generalization of the RT formula is problematic. As we will briefly discuss, von Neumann algebras are classified into factors of three types labeled by type I/II/III, and the von Neumann entropy is only ill-defined for type III factors which naturally occur in quantum field theory \cite{Yngvason:2004uh}, so it was believed that the algebraic reconstruction theorem should exclude the algebraic version of the RT formula \cite{Kang:2018xqy,Kang:2019dfi}. However, recent developments in studying algebraic implications in quantum gravity have shown that it is possible to describe gravitational algebras without type III factors \cite{Witten:2021unn,Chandrasekaran:2022cip,Jensen:2023yxy,Kudler-Flam:2023hkl,Colafranceschi:2023urj,Faulkner:2024gst,Kudler-Flam:2023qfl,Kudler-Flam:2024psh,AliAhmad:2023etg}, in which cases the algebraic RT formula is plausible. Motivated by this, we refine the algebraic reconstruction theorem by including the algebraic RT formula, whose proof is the main work of the paper.

The paper is organized as follows. In section \ref{sec:pre}, we introduce the necessary basics of von Neumann algebra and the modular theory (or Tomita-Takesaki theory). The modular theory is an important tool in describing entanglement via the algebraic language, which is discussed in section \ref{sec:alg entropies}. In section \ref{sec:rec}, we first review the ordinary reconstruction theorem in finite-dimensional cases, then present the refined algebraic reconstruction theorem with the proof. We end with section \ref{sec:discussions} by giving further discussions about the algebraic RT formula.


\section{Preliminaries}\label{sec:pre}

In this section, we give a very brief review and establish our notations for von Neumann algebra and modular theory (or Tomita-Takesaki theory), which are necessary for the algebraic reconstruction theorem. This section basically follows \cite{Witten:2018zxz,Kang:2018xqy,Kang:2019dfi}, and readers who are interested in more rigorous details are encouraged to consult, for examples \cite{Vaughan2009,Sorce:2023fdx,reed1972methods,takesaki2006tomita}.

\subsection{Basics of von Neumann algebra}\label{subsec:vN alg}

	%
	%
	%
	%
	%
	%
	%
	%
	%

%
	%
	%
	%

\begin{defn}
	A linear operator on a Hilbert space $\H$ is a linear map from (a subspace of) $\H$ into $\H$. The set of all such operators is denoted by $\L(\H)$.
\end{defn}

\begin{defn}
	A bounded operator is a linear operator $\O$ satisfying $||\O\ket{\psi}||\leq K||\ket{\psi}||,~\forall\ket{\psi}\in\H$ for some $K\in\mathbb{R}$. The infimum of all such $K$ is called the norm of $\O$. The algebra of all bounded operators on $\H$ is denoted by $\B(\H)\subset \L(\H)$.
\end{defn}

\begin{defn}
	The commutant of a subset $S\subset \B(\H)$ is a subset $S'\subset \B(\H)$ defined by
	\begin{equation}
		S'\equiv\{\O\in\B(\H)|[\O,\P]=0,~\forall \P\in S\}
	\end{equation}
	i.e. every element in $S'$ commutes with all elements in $S$.
\end{defn}

	%
	%
	%

\begin{defn}
	The hermitian conjugate (or adjoint) of an operator $\O$ is an operator $\O^\dagger$ satisfying $\braket{\psi}{\O\xi}=\braket{\O^\dagger \psi}{\xi}$. A hermitian (or self-adjoint) operator $\O$ satisfies $\O=\O^\dagger$.
\end{defn}

\begin{defn}
	A von Neumann algebra on $\H$ is a subalgebra $\A\subset\B(\H)$ satisfying
	\begin{itemize}
		\item $\mathbb{I}\in \A$,
		
		\item $\A$ is closed under hermitian conjugation,
		
		\item $\A''=\A$.
		
	\end{itemize} 
\end{defn}

One should notice that the commutant of a von Neumann algebra is also a von Neumann algebra itself: $(\A')''=(\A'')'=\A'$. Another remark we would like to emphasize here is that for quantum systems, the algebra of observables (hermitian operators) forms a von Neumann algebra. Combining the two facts implies that von Neumann algebras are associated with subregions of physical systems, as we will demonstrate by examples in later sections. Furthermore, two casually independent subregions possess the algebras of localized observables being commutants of each other, as required by causality.

\begin{thm}
	Any operator in a von Neumann algebra is a linear combination of two hermitian operators or four unitary operators.
\end{thm}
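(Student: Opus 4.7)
\medskip

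\noindent\textbf{Proof proposal.} The plan is to establish the two decompositions in sequence: first write an arbitrary element as a combination of two hermitian elements, and then decompose each hermitian element as a combination of two unitaries, yielding the four-unitary decomposition by linearity.

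For the hermitian part, given any $\O\in\A$, I would set
\begin{equation}
	A=\tfrac{1}{2}(\O+\O^\dagger),\qquad B=\tfrac{1}{2i}(\O-\O^\dagger),
\end{equation}
so that $\O=A+iB$ with $A^\dagger=A$ and $B^\dagger=B$. Both $A$ and $B$ lie in $\A$ because $\A$ is a $*$-subalgebra closed under hermitian conjugation and linear combinations.

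For the unitary part, it suffices to decompose an arbitrary hermitian $H\in\A$. I would first rescale by $\|H\|$ so as to reduce to the case $\|H\|\leq 1$, which ensures $\mathbb{I}-H^2\geq 0$ and allows me to define $K=\sqrt{\mathbb{I}-H^2}$ via functional calculus on the self-adjoint element $\mathbb{I}-H^2$. Setting
\begin{equation}
	U_\pm=H\pm i K,
\end{equation}
a direct computation using $[H,K]=0$ gives $U_\pm^\dagger U_\pm=U_\pm U_\pm^\dagger=\mathbb{I}$, so $U_\pm$ are unitary, and $H=\tfrac{1}{2}(U_++U_-)$. Combining with the first step, $\O$ becomes a linear combination of four unitaries.

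The one step that requires care, and is really the only nontrivial point, is verifying that $K=\sqrt{\mathbb{I}-H^2}$ belongs to $\A$. This is where the defining condition $\A''=\A$ enters: by the double commutant theorem a unital $*$-subalgebra of $\B(\H)$ satisfying $\A''=\A$ is closed in the weak operator topology, which in turn guarantees closure under the bounded Borel functional calculus applied to its self-adjoint elements. Since $x\mapsto\sqrt{1-x^2}$ is a bounded continuous function on $[-1,1]\supset\mathrm{spec}(H)$, functional calculus produces $K\in\A$, and the multiplicative/linear closure of $\A$ then ensures $U_\pm\in\A$. Everything else is elementary algebra.
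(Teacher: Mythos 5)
Your proof is correct and is essentially the standard argument that the paper itself defers to by citation (its ``proof'' is just a pointer to the references, e.g.\ page 19 of Jones' lecture notes): Cartesian decomposition $\O=A+iB$ into hermitians, then $H=\tfrac{1}{2}(U_++U_-)$ with $U_\pm=H\pm i\sqrt{\mathbb{I}-H^2}$ after normalizing $\|H\|\leq 1$, with $\sqrt{\mathbb{I}-H^2}\in\A$ because $\A=\A''$ is weakly (hence norm) closed and therefore stable under the functional calculus --- continuous functional calculus and Stone--Weierstrass already suffice here, so the appeal to Borel functional calculus is heavier than needed but harmless. Since the paper supplies no argument of its own beyond the citation, there is nothing further to compare.
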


\begin{proof}
	See for examples \cite{Kang:2018xqy} or page 19 in \cite{Vaughan2009}.
\end{proof}

\begin{defn}
	A von Neumann algebra $\A$ is a factor if it has a trivial center $\mathcal{Z}$:
	\begin{equation}
		\mathcal{Z}\equiv \A\cap\A'=\{\lambda \mathbb{I}|\lambda\in\C\}
	\end{equation}
	otherwise $\A$ is called a non-factor.
\end{defn}

In fact, any non-factor can be ``decomposed'' into factors \cite{Sorce:2023fdx}. Especially in finite-dimensional cases \cite{Harlow:2016vwg}, non-factors can always be decomposed into a ``block-diagonal'' form by choosing an appropriate basis, with each block being a factor. In other words, when consider classifications of von Neumann algebra, we only need to consider factors, which are classified into three types: type I/II/III. In this paper, we do not need the explicit classifications of factors. Instead, we are interested in whether some notions are well-defined or not in different types of factors, as summarized as follows,

$$
\begin{array}{|c|c|c|c|c|c|}
	\hline {\text {Type}} & \H=\H_A \otimes \H_B & \Tr & \rho_\psi & S(\psi;\A)  & S_{\text {rel }}(\psi|\xi;\A) \\
	\hline \text { I } & \checkmark & \checkmark & \checkmark & \checkmark  & \checkmark \\
	\hline \text { II } & \times & \checkmark & \checkmark & \checkmark &  \checkmark \\
	\hline \text { III } & \times & \times & \times & \times  & \checkmark \\
	\hline
\end{array}
$$
where we write $\H=\H_A \otimes \H_B$ to denote that the Hilbert space can be decomposed according to subregions. Other notions will be introduced shortly. As we will see, the relative entropy has an algebraic generalization (denoted as $S_{\text {rel }}(\psi|\xi;\A)$) for factors of any type, while the generalized von Neumann entropy (denoted as $S(\psi;\A)$) can only be well-defined in type I/II due to the lack of well-defined notions of the trace function and density operators in factor of type III.

%
%

\subsection{Modular theory}

\begin{defn}
	A subset $\mathcal{H}_{0}\subset\mathcal{H}$ is dense in $\mathcal{H}$ if for every vector $\ket{\psi} \in \mathcal{H}$ and for every $\epsilon>0$, there exists a vector $\ket{\phi} \in \mathcal{H}_{0}$ such that $\|\ket{\psi}-\ket{\phi}\|<\epsilon$.
\end{defn}

\begin{defn}
	$\ket{\psi}\in\H$ is cyclic with respect to a von Neumann algebra $\A$ if $\A\ket{\psi}\equiv\{\O\ket{\psi}|\forall \O\in \A\}$ is dense in $\H$.
\end{defn}

The fact that $\mathcal{H}_{0}\subset\mathcal{H}$ is dense in $\mathcal{H}$ implies that every vector in $\H$ can be approximated by a vector in $\mathcal{H}_{0}$. More rigorously, for any vector $\ket{\psi}\in\H$, there exists a sequence $\{\ket{\phi_n}|n=1,2,\dots\}\subset\H_0$ such that $\lim_{n\rightarrow\infty}\ket{\phi_n}=\ket{\psi}$. Therefore, the fact that $\ket{\psi}\in\H$ is cyclic with respect to $\A$ says that $\H$ can be generated by $\A$ acting on $\ket{\psi}$, i.e. $\ket{\psi}$ cycles through $\H$ via $\A$.

\begin{defn}
	$\ket{\psi}\in\H$ is separating with respect to a von Neumann algebra $\A$ if $\O\ket{\psi}=0 $ implies $\O=0$ for $\O\in\A$.
\end{defn}

Provided a separating vector $\ket{\psi}\in\H$, if we have $A\ket{\psi}=B\ket{\psi}$ with $A,B\in\A$, then we have
\begin{equation}
	(A-B)\ket{\psi}=0\quad\Rightarrow\quad A-B=0\quad \text{or}\quad A=B
\end{equation}
i.e. two distinct operators in $M$ can not act in the same way on $\ket{\psi}$, so that any separating vector with respect to $\A$ can be used to distinguish or separate operators in $\A$.

%
%
%

\begin{thm}\label{thm:cs}
	$\ket{\psi}\in\H$ is separating with respect to $\A$ if and only if $\ket{\psi}\in\H$ is cyclic with respect to $\A'$, and vice versa.
\end{thm}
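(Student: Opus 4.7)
The plan is to prove both implications by exploiting the interplay between the commutant condition $\A''=\A$ and the density/separating definitions. The statement is symmetric under $\A\leftrightarrow\A'$ (using $(\A')'=\A''=\A$), so the ``vice versa'' clause is automatic once I have the two directions for a given ordered pair.

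The easy direction is: if $\ket{\psi}$ is cyclic for $\A'$, then $\ket{\psi}$ is separating for $\A$. Suppose $\O\in\A$ satisfies $\O\ket{\psi}=0$. For any $\P\in\A'$, since $\O$ commutes with $\P$, I get $\O\P\ket{\psi}=\P\O\ket{\psi}=0$, so $\O$ annihilates the whole set $\A'\ket{\psi}$, which is dense in $\H$ by assumption. Because every element of a von Neumann algebra lies in $\B(\H)$ and is therefore bounded (hence continuous), $\O$ must annihilate all of $\H$, giving $\O=0$.

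The harder direction is: if $\ket{\psi}$ is separating for $\A$, then $\ket{\psi}$ is cyclic for $\A'$. Here the main obstacle is that the definition of separating is a statement about individual elements of $\A$, whereas cyclicity is a statement about the closure of a whole orbit. My plan is to reduce the latter to the former by manufacturing a nontrivial operator in $\A$ that would annihilate $\ket{\psi}$ if cyclicity failed. Concretely, let $\mathcal{K}$ denote the closure of $\A'\ket{\psi}$ in $\H$, and let $P$ be the orthogonal projection onto $\mathcal{K}$. The crucial structural step is to observe that $\mathcal{K}$ is invariant under $\A'$: for $\O'\in\A'$ and $\P'\ket{\psi}\in\A'\ket{\psi}$, one has $\O'\P'\ket{\psi}=(\O'\P')\ket{\psi}\in\A'\ket{\psi}$, and this passes to the closure by continuity. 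Invariance of $\mathcal{K}$ under $\A'$ is equivalent to $P$ commuting with every element of $\A'$, i.e.\ $P\in\A''=\A$.

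To close the argument, I use $\mathbb{I}\in\A'$, which gives $\ket{\psi}\in\A'\ket{\psi}\subset\mathcal{K}$, so $P\ket{\psi}=\ket{\psi}$ and therefore $(\mathbb{I}-P)\ket{\psi}=0$. Since $\mathbb{I}-P\in\A$ and $\ket{\psi}$ is separating for $\A$, the separating property forces $\mathbb{I}-P=0$, so $\mathcal{K}=\H$ and $\ket{\psi}$ is cyclic for $\A'$. The main obstacle was exactly the step $P\in\A$, which is where the defining axiom $\A''=\A$ of a von Neumann algebra is essential — for a general $*$-algebra the projection onto the closure of an invariant subspace need not belong back to $\A$, and the argument would collapse. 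The ``vice versa'' direction follows verbatim upon swapping $\A\leftrightarrow\A'$.
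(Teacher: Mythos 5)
Your proof is correct and follows essentially the same route as the paper: the easy direction uses boundedness of $\O\in\A$ to pass from annihilating the dense set $\A'\ket{\psi}$ to annihilating all of $\H$, and the hard direction hinges on the projection associated with $\overline{\A'\ket{\psi}}$ commuting with $\A'$ and hence lying in $\A''=\A$, where the separating property is applied. The only cosmetic difference is that you argue directly with the projection $P$ onto the closure of $\A'\ket{\psi}$ (forcing $\mathbb{I}-P=0$), whereas the paper argues by contraposition with the complementary projection onto $(\A'\ket{\psi})_{\perp}$; you also spell out the invariance-of-$\mathcal{K}$ step the paper leaves implicit.
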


\begin{proof}
	\begin{itemize}
		\item ``if'': Consider $\O\in\A$ satisfying $\O\ket{\psi}=0$, then 
		\begin{equation}
			\O\A'\ket{\psi}=\A'\O\ket{\psi}=0
		\end{equation}
		Since $\A'\ket{\psi}$ is dense in $\H$, we have $\O$ annihilates all states in $\H$ such that $\O=0$.
		
		
		\item ``only if'': If $\ket{\psi}\in\H$ is not cyclic with respect to $\A'$, then $\A'\ket{\psi}$ is a proper subspace of $\H$. Consider a projection 
		\begin{equation}
			\Pi:\H\rightarrow\H \quad \text{onto}\quad (\A'\ket{\psi})_{\perp}\quad \Rightarrow \quad \Pi\A'\ket{\psi}=0,
		\end{equation}
		which implies
		\begin{equation}
			\mathbb{I}\in \A'\quad \Rightarrow \quad \ket{\psi}=\mathbb{I}\ket{\psi}\in \A'\ket{\psi}\quad \Rightarrow \quad \Pi\ket{\psi}=0.
		\end{equation}
		However, we can find $\Pi$ is bounded and commutes with $\A'$, so $\Pi\in\A$, i.e. there exists a nonzero operator in $\A$ which annihilates $\ket{\psi}$, so $\ket{\psi}$ is not separating w.r.t. $\A$.
	\end{itemize}
	
\end{proof}

A direct consequence is that when we assume $\ket{\psi}\in\H$ is both cyclic and separating with respect to $\A$, then $\ket{\psi}\in\H$ is also both cyclic and separating with respect to $\A'$.

\begin{defn}
	A relative Tomita operator on $\A$ is an anti-linear operator satisfying\footnote{Some literatures use the convention that interchanges the positions of $\psi$ and $\xi$ in the subscript of $S$.}
	\begin{equation}
		S_{\xi|\psi}\left(\O\ket{\psi}\right)=\O^\dagger \ket{\xi},\quad \forall \O\in\A
	\end{equation}
\end{defn}

Notice that $S_{\xi|\psi}$ is densely defined (i.e. whose domain is a dense subset of $\H$) if and only if $\ket{\psi}$ is cyclic and separating with respect to $\A$. The cyclic condition ensures that the domain $\O\ket{\psi}$ is dense while the separating condition is to avoid the possibility that $\O\ket{\psi}=0,~ \O^\dagger\ket{\xi}\neq 0$. Hereafter we mostly assume the cyclic separating condition of $\ket{\psi}$ for $S_{\xi|\psi}$.

\begin{thm}
	Provided both $\ket{\psi},\ket{\xi}$ are cyclic and separating with respect to $\A$, we have
	\begin{equation}
		S_{\xi|\psi}^{-1}=S_{\psi|\xi},
	\end{equation}
\end{thm}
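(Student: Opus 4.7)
The plan is to verify the inverse relation by direct composition on the natural dense domains where both operators are defined. Because $\ket{\psi}$ and $\ket{\xi}$ are each assumed to be cyclic and separating with respect to $\A$, the relative Tomita operators $S_{\xi|\psi}$ and $S_{\psi|\xi}$ are both densely defined, with domains containing $\A\ket{\psi}$ and $\A\ket{\xi}$ respectively (the cyclic condition guarantees density, and the separating condition guarantees unambiguous definition).

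First I would pick an arbitrary vector of the form $\O\ket{\psi}$ with $\O\in\A$, since such vectors form a dense subspace of $\H$. Applying the defining relation of $S_{\xi|\psi}$ gives
\begin{equation}
S_{\xi|\psi}\bigl(\O\ket{\psi}\bigr)=\O^\dagger\ket{\xi}.
\end{equation}
Since $\A$ is closed under hermitian conjugation, $\O^\dagger\in\A$, so $\O^\dagger\ket{\xi}$ lies in $\A\ket{\xi}$, which is the dense domain of $S_{\psi|\xi}$. Applying $S_{\psi|\xi}$ and using its defining relation with $B=\O^\dagger$ yields
\begin{equation}
S_{\psi|\xi}\bigl(\O^\dagger\ket{\xi}\bigr)=(\O^\dagger)^\dagger\ket{\psi}=\O\ket{\psi}.
\end{equation}
Thus $S_{\psi|\xi}\circ S_{\xi|\psi}$ acts as the identity on the dense subspace $\A\ket{\psi}$. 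The same computation with $\ket{\psi}$ and $\ket{\xi}$ interchanged shows that $S_{\xi|\psi}\circ S_{\psi|\xi}$ acts as the identity on $\A\ket{\xi}$, which is likewise dense in $\H$.

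Finally I would combine these two facts to conclude that $S_{\psi|\xi}$ is a two-sided inverse of $S_{\xi|\psi}$ on its natural domain, giving $S_{\xi|\psi}^{-1}=S_{\psi|\xi}$ as claimed. The only real subtlety is the domain bookkeeping: one must check that the output of the first operator actually lies in the domain of the second, which is precisely where the symmetry of the cyclic-separating hypothesis for both $\ket{\psi}$ and $\ket{\xi}$ is used (together with the stability of $\A$ under $\dagger$). No delicate closure argument is needed for this statement, since the identity is established on a dense core and the claim is understood at the level of these densely defined anti-linear operators; this mild domain check is the main, and essentially only, obstacle.
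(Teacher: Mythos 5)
Your proposal is correct and follows essentially the same route as the paper's proof: composing the defining relations on the dense subspace $\A\ket{\psi}$ to get $S_{\psi|\xi}S_{\xi|\psi}\O\ket{\psi}=\O\ket{\psi}$. The extra domain bookkeeping and the reversed composition on $\A\ket{\xi}$ are careful touches, but they do not change the argument in any essential way.
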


\begin{proof}
	$S_{\psi|\xi}S_{\xi|\psi}\left(\O\ket{\psi}\right)=S_{\psi|\xi}\left(\O^\dagger \ket{\xi}\right)=\O\ket{\psi},~\forall \O\in\A$.
\end{proof}

\begin{thm}
	Provided $\ket{\psi}$ is cyclic and separating with respect to $\A$, we have
	\begin{equation}
		S^\dagger_{\xi|\psi}=S'_{\xi|\psi}
	\end{equation}
	where $S'_{\xi|\psi}$ is a relative Tomita operator on $\A'$.
\end{thm}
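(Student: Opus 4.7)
The plan is to verify the identity on the natural dense domains of the two operators, exploiting the commutation between $\A$ and $\A'$. First I would invoke Theorem \ref{thm:cs} to ensure that $\ket{\psi}$ is cyclic and separating with respect to $\A'$ as well, so that the relative Tomita operator on $\A'$ is densely defined on $\A'\ket{\psi}$ by
\begin{equation}
S'_{\xi|\psi}(\O'\ket{\psi}) = \O'^\dagger\ket{\xi}, \quad \forall\, \O' \in \A'.
\end{equation}
Since $\A\ket{\psi}$ is also dense in $\H$, it then suffices to test the claimed identity on vectors $\O'\ket{\psi}\in \A'\ket{\psi}$ paired against vectors $\O\ket{\psi}\in \A\ket{\psi}$, with $\O\in\A$ and $\O'\in\A'$.

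The core computation runs as follows. Recalling that for an anti-linear densely defined operator $T$ the adjoint is characterised by $\braket{\chi}{T\phi} = \braket{\phi}{T^\dagger \chi}$, I would evaluate
\begin{equation}
\braket{\O'\psi}{S_{\xi|\psi}(\O\psi)} = \braket{\O'\psi}{\O^\dagger\xi} = \bra{\psi}\O'^\dagger \O^\dagger \ket{\xi}.
\end{equation}
Since $\A$ and $\A'$ are each closed under hermitian conjugation and mutually commute, $[\O^\dagger,\O'^\dagger]=0$, so this rewrites as
\begin{equation}
\bra{\psi}\O^\dagger\O'^\dagger\ket{\xi} = \braket{\O\psi}{\O'^\dagger\xi} = \braket{\O\psi}{S'_{\xi|\psi}(\O'\psi)}.
\end{equation}
On the other hand the anti-linear adjoint relation gives $\braket{\O'\psi}{S_{\xi|\psi}(\O\psi)} = \braket{\O\psi}{S^\dagger_{\xi|\psi}(\O'\psi)}$, so combining yields
\begin{equation}
\braket{\O\psi}{\bigl(S^\dagger_{\xi|\psi}-S'_{\xi|\psi}\bigr)(\O'\psi)} = 0, \quad \forall\, \O\in\A.
\end{equation}
Cyclicity of $\ket{\psi}$ with respect to $\A$ then forces $S^\dagger_{\xi|\psi}(\O'\ket{\psi}) = S'_{\xi|\psi}(\O'\ket{\psi})$ for every $\O'\in\A'$, which is precisely the desired equality on the common dense domain $\A'\ket{\psi}$.

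The only genuine subtlety — and what I expect to be the main obstacle to a fully rigorous presentation — is that the operators in play are anti-linear and unbounded, so care must be taken that all adjoints are understood in the sense of densely defined closable operators on their natural domains. The cyclic and separating hypothesis on $\ket{\psi}$, together with the matching property for $\A'$ supplied by Theorem \ref{thm:cs}, is exactly what makes each $S$-type operator densely defined and closable, so that the formal identification of actions above legitimately upgrades to an operator equality (in the closure sense). Beyond this book-keeping, the proof reduces to the commutation $[\A,\A']=0$ together with the anti-linear adjoint convention.
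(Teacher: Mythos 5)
Your proposal is correct and follows essentially the same route as the paper: both verify the matrix elements $\diracprod{\O\psi}{S'_{\xi|\psi}}{\O'\psi}$ versus $\diracprod{\O'\psi}{S_{\xi|\psi}}{\O\psi}$ on the dense sets $\A\ket{\psi}$, $\A'\ket{\psi}$, using the anti-linear adjoint convention and the commutation $[\O^\dagger,\O'^\dagger]=0$. Your added remarks on density, domains and closability only make explicit what the paper leaves implicit, so there is no substantive difference.
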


\begin{proof}
	Note that a proper definition of anti-linear operators acting on the bra first is
	\begin{equation}
		\left(\langle\phi|S\right)|\chi\rangle\equiv \left[\langle\phi|\left(S|\chi\rangle\right)\right]^*
	\end{equation}
	which implies
	\begin{equation}\label{eq:anti-linear}
		\diracprod{\phi}{S}{\chi}=\left[\left(\langle\phi|S\right)|\chi\rangle\right]^*=\left[\braket{S^\dagger\phi}{\chi}\right]^*=\braket{\chi}{S^\dagger\phi}=\diracprod{\chi}{S^\dagger}{\phi}
	\end{equation}
	then due to anti-linearity of the relative Tomita operator, we only need to prove that
	\begin{equation}
		\diracprod{\chi}{S'_{\xi|\psi}}{\phi}=\diracprod{\phi}{S_{\xi|\psi}}{\chi}
	\end{equation}
	
	Since by assumption and theorem \ref{thm:cs}, $\ket{\psi}$ is cyclic with respect to both $\A$ and $\A'$, we can set $\ket{\phi}=\O'\ket{\psi},~\ket{\chi}=\O\ket{\psi}$ where $\O\in\A,~\O'\in\A'$ such that
	\begin{align}
		&\diracprod{\chi}{S'_{\xi|\psi}}{\phi}=\diracprod{\psi}{\O^\dagger \O'^\dagger}{\xi}=\diracprod{\psi}{\O'^\dagger \O^\dagger}{\xi}=\diracprod{\phi}{S_{\xi|\psi}}{\chi}
	\end{align}
	where the second equality uses that $\O\in\A,~\O'\in\A'\Rightarrow [\O,\O']=0\Rightarrow[\O^\dagger,\O'^\dagger]=0$.
\end{proof}

\begin{defn}
	Provided $\ket{\psi}$ is cyclic and separating with respect to $\A$, the relative modular operator on $\A$ is defined by\footnote{There is an equivalent definition of the relative modular operator via the unique polar decomposition $S_{\xi|\psi}=J_{\xi|\psi}\Delta_{\xi|\psi}^{\frac{1}{2}}$ with $J_{\xi|\psi}$ anti-unitary.}
	\begin{equation}
		\Delta_{\xi|\psi}\equiv S^\dagger_{\xi|\psi}S_{\xi|\psi}
	\end{equation}
	and the relative modular Hamiltonian on $\A$ is defined by 
	\begin{equation}
		h_{\xi|\psi}\equiv-\log \Delta_{\xi|\psi}
	\end{equation}
\end{defn}

\begin{thm}
	The relative modular operator $\Delta_{\xi|\psi}$ on $\A$ is Hermitian.
\end{thm}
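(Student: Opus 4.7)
The plan is to verify the standard Hermiticity condition $\braket{\phi}{\Delta_{\xi|\psi}\chi} = \braket{\Delta_{\xi|\psi}\phi}{\chi}$ on the dense domain on which $\Delta_{\xi|\psi}$ is defined. The key observation up front is that although $S_{\xi|\psi}$ and $S^{\dagger}_{\xi|\psi}$ are each anti-linear, their composition $\Delta_{\xi|\psi} = S^{\dagger}_{\xi|\psi}S_{\xi|\psi}$ is linear, so the ordinary Hermiticity criterion for a linear operator is the right thing to check.

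The only ingredient I would invoke is the anti-linear adjoint rule derived as equation~(6),
\begin{equation}
\diracprod{\phi}{S_{\xi|\psi}}{\chi} = \diracprod{\chi}{S^{\dagger}_{\xi|\psi}}{\phi},
\end{equation}
which I will apply twice. A preliminary step is to establish that the anti-linear adjoint is involutive, $(S^{\dagger}_{\xi|\psi})^{\dagger} = S_{\xi|\psi}$, by specializing the rule to the anti-linear operator $S^{\dagger}_{\xi|\psi}$ and comparing with the $\phi\leftrightarrow\chi$ swap of the original. The main computation then reads
\begin{equation}
\braket{\phi}{\Delta_{\xi|\psi}\chi} = \diracprod{\phi}{S^{\dagger}_{\xi|\psi}}{S_{\xi|\psi}\chi} = \braket{S_{\xi|\psi}\chi}{S_{\xi|\psi}\phi},
\end{equation}
where the second equality applies the adjoint rule to $S^{\dagger}_{\xi|\psi}$ acting on the vector $S_{\xi|\psi}\ket{\chi}$, together with the involution just derived. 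Repeating exactly the same manipulation with $\phi$ and $\chi$ swapped and then using $\braket{a}{b} = \braket{b}{a}^{\ast}$ gives $\braket{\Delta_{\xi|\psi}\phi}{\chi} = \braket{S_{\xi|\psi}\chi}{S_{\xi|\psi}\phi}$, which coincides with the previous expression and closes the argument.

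The main obstacle is pure bookkeeping: every time one of $S_{\xi|\psi}$ or $S^{\dagger}_{\xi|\psi}$ is transposed across the inner product an anti-linear conjugation is generated, so the convention in equation~(5) must be applied consistently to avoid sign or conjugation errors. In the infinite-dimensional setting $\Delta_{\xi|\psi}$ is typically unbounded, and the identity should be read on the dense domain guaranteed by the cyclic-separating hypothesis on $\ket{\psi}$. As a bonus the same computation immediately yields $\braket{\phi}{\Delta_{\xi|\psi}\phi} = \|S_{\xi|\psi}\ket{\phi}\|^{2} \geq 0$, so $\Delta_{\xi|\psi}$ is in fact positive, and hence (upon taking the standard closure) positive self-adjoint, consistent with the polar decomposition $S_{\xi|\psi} = J_{\xi|\psi}\Delta_{\xi|\psi}^{1/2}$ mentioned in the footnote to the definition.
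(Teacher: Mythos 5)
Your proof is correct and amounts to the same argument as the paper's: the paper writes the one-line operator identity $\Delta_{\xi|\psi}^{\dagger}=(S^{\dagger}_{\xi|\psi}S_{\xi|\psi})^{\dagger}=S^{\dagger}_{\xi|\psi}S_{\xi|\psi}$, and you simply unpack it at the level of matrix elements using the same anti-linear adjoint rule and the involution $(S^{\dagger}_{\xi|\psi})^{\dagger}=S_{\xi|\psi}$, with appropriate care about domains. The positivity observation $\diracprod{\phi}{\Delta_{\xi|\psi}}{\phi}=\|S_{\xi|\psi}\ket{\phi}\|^{2}\geq 0$ is a correct and useful bonus beyond what the paper states.
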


\begin{proof}
	$\Delta_{\xi|\psi}^\dagger=\left(S^\dagger_{\xi|\psi}S_{\xi|\psi}\right)^\dagger=S^\dagger_{\xi|\psi}S_{\xi|\psi}$.
\end{proof}

\begin{thm}
	Provided both $\ket{\psi},\ket{\xi}$ are cyclic and separating with respect to $\A$, we have
	\begin{equation}
		\Delta_{\psi|\xi}^{-1}=\Delta'_{\xi|\psi}\quad\Rightarrow\quad h_{\psi|\xi}=-h_{\xi|\psi}'
	\end{equation}
	where $\Delta'_{\xi|\psi}$ and $h_{\xi|\psi}'$ are the relative modular operator and the relative modular Hamiltonian on $\A'$ respectively.
\end{thm}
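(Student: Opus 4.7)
The plan is to reduce the identity $\Delta_{\psi|\xi}^{-1}=\Delta'_{\xi|\psi}$ to an algebraic manipulation involving only the two previously established facts about the relative Tomita operator: the swap inversion $S^{-1}_{\xi|\psi}=S_{\psi|\xi}$, and the adjoint/commutant duality $S^\dagger_{\xi|\psi}=S'_{\xi|\psi}$. The corollary about the relative modular Hamiltonian will then follow by taking $-\log$ of both sides.

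First, I would apply the definition $\Delta_{\psi|\xi}\equiv S^\dagger_{\psi|\xi}S_{\psi|\xi}$ and formally invert, giving
\begin{equation}
\Delta_{\psi|\xi}^{-1}=S_{\psi|\xi}^{-1}\bigl(S^\dagger_{\psi|\xi}\bigr)^{-1}.
\end{equation}
Using that adjoint and inverse commute for invertible operators together with the swap identity $S^{-1}_{\psi|\xi}=S_{\xi|\psi}$, I would rewrite this as $S_{\xi|\psi}\,S^\dagger_{\xi|\psi}$. Next, I would compute $\Delta'_{\xi|\psi}$ directly from its definition on the commutant, $\Delta'_{\xi|\psi}=(S'_{\xi|\psi})^\dagger S'_{\xi|\psi}$, and use the adjoint/commutant duality $S'_{\xi|\psi}=S^\dagger_{\xi|\psi}$ (so that $(S'_{\xi|\psi})^\dagger=S_{\xi|\psi}$) to land on the same expression $S_{\xi|\psi}\,S^\dagger_{\xi|\psi}$. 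Matching the two computations gives $\Delta_{\psi|\xi}^{-1}=\Delta'_{\xi|\psi}$.

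For the implication on the modular Hamiltonian, I would simply take the spectral logarithm of both sides of the operator identity: since $\Delta_{\psi|\xi}$ is positive (self-adjoint with positive spectrum), $\log\Delta_{\psi|\xi}^{-1}=-\log\Delta_{\psi|\xi}$, so
\begin{equation}
h'_{\xi|\psi}=-\log\Delta'_{\xi|\psi}=-\log\Delta_{\psi|\xi}^{-1}=\log\Delta_{\psi|\xi}=-h_{\psi|\xi},
\end{equation}
which is the claimed relation.

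The main obstacle is not conceptual but technical: the Tomita operators are generally unbounded anti-linear operators only densely defined under the cyclic/separating hypothesis, so strictly speaking each manipulation of adjoints, inverses and products above comes with a domain equality that must be justified. The cleanest way to handle this is to invoke the cyclic and separating property of both $\ket{\psi}$ and $\ket{\xi}$ with respect to $\A$ (and hence, by Theorem \ref{thm:cs}, with respect to $\A'$), which guarantees that $S_{\psi|\xi}$, $S_{\xi|\psi}$ and $S'_{\xi|\psi}$ are all densely defined with dense range, so that their inverses and adjoints make sense and the manipulations above hold on the appropriate common core. Once this is granted, the proof is essentially the three-line computation above.
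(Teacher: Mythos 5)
Your proposal is correct and follows essentially the same argument as the paper: the paper's proof is the single chain $\Delta'_{\xi|\psi}=S'^\dagger_{\xi|\psi}S'_{\xi|\psi}=S_{\xi|\psi}S^\dagger_{\xi|\psi}=S^{-1}_{\psi|\xi}(S^\dagger_{\psi|\xi})^{-1}=(S^\dagger_{\psi|\xi}S_{\psi|\xi})^{-1}=\Delta_{\psi|\xi}^{-1}$, i.e.\ exactly your two ingredients (the inverse-swap identity and the adjoint/commutant duality), merely run as one sequence rather than meeting in the middle. Your added remarks on positivity of $\Delta$ and on domain/closability issues are sound extra care that the paper leaves implicit.
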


\begin{proof}
	$\Delta'_{\xi|\psi}=S'^\dagger_{\xi|\psi}S'_{\xi|\psi}=S_{\xi|\psi}S^\dagger_{\xi|\psi}=S^{-1}_{\psi|\xi}(S^\dagger_{\psi|\xi})^{-1}=(S^\dagger_{\psi|\xi}S_{\psi|\xi})^{-1}=\Delta_{\psi|\xi}^{-1}$.
\end{proof}

\begin{defn}
	Provided $\ket{\psi}$ is cyclic and separating with respect to $\A$, we define the Tomita operator $S_{\psi}\equiv S_{\psi|\psi}:\O\ket{\psi}\mapsto\O^\dagger \ket{\psi}$; the modular operator $\Delta_{\psi}\equiv\Delta_{\psi|\psi}= S^\dagger_{\psi}S_{\psi}$; the modular Hamiltonian $h_{\psi}\equiv h_{\psi|\psi}=-\log \Delta_{\psi}$.
\end{defn}

\begin{thm}
	\begin{equation}
		\diracprod{\psi}{\O\P}{\psi}=\diracprod{\psi}{\P\Delta_{\psi}\O}{\psi},~\forall\O,\P\in\A
	\end{equation}
\end{thm}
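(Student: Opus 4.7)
The plan is to unpack $\diracprod{\psi}{\P\Delta_{\psi}\O}{\psi}$ using the definition $\Delta_\psi=S_\psi^\dagger S_\psi$ and then push everything back through the Tomita operator so that it returns to $\diracprod{\psi}{\O\P}{\psi}$. The key ingredients I will use are (i) the defining property $S_\psi(\O\ket{\psi})=\O^\dagger\ket{\psi}$ of the Tomita operator, (ii) its involutive nature $S_\psi^2=\mathbb{I}$ on the dense subspace $\A\ket{\psi}$ (which follows by applying the definition twice and hence lets me send $\O^\dagger\ket{\psi}$ back to $\O\ket{\psi}$), and (iii) the anti-linear adjoint identity \eqref{eq:anti-linear}, namely $\diracprod{\phi}{S^\dagger}{\chi}=\diracprod{\chi}{S}{\phi}$, which is the only nontrivial identity because $S_\psi$ is anti-linear rather than linear.

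First I would move $\P$ to the bra in $\diracprod{\psi}{\P\Delta_\psi \O}{\psi}$ to get $\langle \P^\dagger\psi |\Delta_\psi\O\psi\rangle=\langle \P^\dagger\psi|S_\psi^\dagger S_\psi \O\psi\rangle$. Then, since $\O\in\A$, the Tomita operator acts as $S_\psi\O\ket{\psi}=\O^\dagger\ket{\psi}$, so the expression simplifies to $\langle \P^\dagger\psi|S_\psi^\dagger \O^\dagger\psi\rangle$. At this point the remaining $S_\psi^\dagger$ is the obstacle, since $\O^\dagger\ket{\psi}$ is not obviously of the form $\A'\ket{\psi}$, so I cannot directly convert $S_\psi^\dagger$ into $S_\psi'$ acting on something nice. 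Instead, I will apply \eqref{eq:anti-linear} with $\ket{\phi}=\P^\dagger\ket{\psi}$ and $\ket{\chi}=\O^\dagger\ket{\psi}$ to flip the bracket into $\diracprod{\O^\dagger\psi}{S_\psi}{\P^\dagger\psi}=\langle\O^\dagger\psi| S_\psi \P^\dagger\psi\rangle$.

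Finally, $S_\psi\P^\dagger\ket{\psi}=\P\ket{\psi}$ by the defining property of $S_\psi$ applied with $\O\to\P^\dagger$ (equivalently, by $S_\psi^2=\mathbb{I}$), so the expression collapses to $\langle \O^\dagger\psi|\P\psi\rangle=\diracprod{\psi}{\O\P}{\psi}$, which is the desired identity. The main subtlety to watch is that anti-linearity forces the complex conjugation in \eqref{eq:anti-linear}, and if I attempted the manipulation by first moving $\Delta_\psi$ onto $\P^\dagger\ket{\psi}$ using hermiticity of $\Delta_\psi$, I would need $\P^\dagger\ket{\psi}$ to lie in the domain of $S_\psi^\dagger$, which is only guaranteed once I have exploited theorem \ref{thm:cs} to ensure cyclicity and separability of $\ket{\psi}$ for both $\A$ and $\A'$; I would note this at the start of the proof and then proceed with the cleaner route above.
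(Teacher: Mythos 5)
Your proof is correct and follows essentially the same route as the paper: expand $\Delta_\psi=S_\psi^\dagger S_\psi$, move $\P$ to the bra, use $S_\psi\O\ket{\psi}=\O^\dagger\ket{\psi}$, flip the remaining $S_\psi^\dagger$ with the anti-linear adjoint identity \eqref{eq:anti-linear}, and finish with $S_\psi\P^\dagger\ket{\psi}=\P\ket{\psi}$. The domain remark at the end is a sensible extra caution but does not change the argument.
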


\begin{proof}
	\begin{align*}
		\diracprod{\psi}{\P\Delta_{\psi}\O}{\psi}&=\diracprod{\psi}{\P S^\dagger_{\psi}S_{\psi}\O}{\psi}=\diracprod{\P^\dagger\psi}{ S^\dagger_{\psi}}{\O^\dagger\psi}\\
		&=\diracprod{\O^\dagger\psi}{ S_{\psi}}{\P^\dagger\psi}=\braket{\O^\dagger\psi}{\P\psi}=\diracprod{\psi}{\O\P}{\psi}
	\end{align*}
	where in the second line we use \eqref{eq:anti-linear}.
\end{proof}

\section{Algebraic entropies}\label{sec:alg entropies}

In this section, we discuss the algebraic version of relative entropy and von Neumann entropy. Before introducing these algebraic entropies, we relate several seemingly different definitions of ``state'' which frequently occur in literatures from different fields.

\subsection{Definitions of ``state''}\label{sec:def of state}

There are three definitions of ``state'':
\begin{enumerate}
	\item A state is a vector $\ket{\psi}$ normalized as $\braket{\psi}{\psi}=1$ in a Hilbert space $\H$.

	\item A state is a density operator (or density matrix) $\rho_{\psi;\A}\in \A$ satisfying 
	\begin{equation}
		\rho_{\psi;\A}=\rho_{\psi;\A}^\dagger;\quad \rho_{\psi;\A}\geq 0;\quad\Tr_{\A}\rho_{\psi;\A}=1.
	\end{equation}

	\item A state is a linear functional $\omega_\psi:\A\rightarrow \mathbb{C}$ satisfying
	\begin{equation}
		\omega_\psi\geq0;\quad \omega_\psi(\mathbb{I})=1. 
	\end{equation}
\end{enumerate}

The first two definitions are commonly used in quantum physics, with the second one being more frequent in quantum information. The third definition is less common and mathematical, which readers may encounter in issues with algebraic quantum field theory. These three definitions are correlated to each other via the expectation value: 
\begin{equation}\label{eq:diff def of state}
	\diracprod{\psi}{\O}{\psi}=\Tr_{\A}\left(\rho_{\psi;\A}\O\right)=\omega_\psi(\O),\quad \forall\O\in\A.
\end{equation}
Hereafter we will denote a state $\psi$ and not distinguish the above definitions. Note that the second definition based on density operators is not well-defined in factors of type III as briefly mentioned in the last of subsection \ref{subsec:vN alg}.

\subsection{Araki's relative entropy}

The quantum relative entropy is defined by
\begin{equation}\label{def:quantum relative entropy}
	\Srel(\rho_\psi|\rho_\phi)\equiv\Tr\left[\rho_{\psi}\left(\log\rho_{\psi}-\log\rho_{\phi}\right)\right]
\end{equation}
which measures how much the state $\psi$ differs from another state $\phi$. Its algebraic generalization is defined due to Araki \cite{araki1975relative,araki1975inequalities}:
\begin{equation}\label{def:Arakis relative entropy}
	\Srel(\psi|\phi;\A)\equiv\diracprod{\psi}{h_{\phi|\psi}}{\psi}
\end{equation}
with $\ket{\psi}$ being cyclic and separating with respect to $\A$.

To see that Araki's relative entropy \eqref{def:Arakis relative entropy} is indeed a generalization of the quantum relative entropy \eqref{def:quantum relative entropy}, we now give a finite-dimensional example (factor of type I) \cite{Witten:2018zxz} where the two relative entropies coincide with each other. Consider a finite-dimensional bipartite system $\H=\H_A\otimes\H_B$ with $\dim A=\dim B=n$, we define a von Neumann algebra $\A$ to be of form
\begin{equation}
	\A=\{\O_A \otimes \mathbb{I}_B|\forall \O_A\in \B(\H_A)\}=\B(\H_A)\otimes \mathbb{I}_B
\end{equation}
then the commutant is given by
\begin{equation}
	\A'=\{\mathbb{I}_A \otimes \O_B|\forall \O_B\in \B(\H_B)\}=\mathbb{I}_A \otimes\B(\H_B)
\end{equation}

A lessen here is that identifying the algebras of localized observables $\A,\A'$ is equivalent to identifying a ``decomposition'' of the total system just like $\H=\H_A\otimes\H_B$, while the former is well-defined in factors of any type. In this sense, von Neumann algebras are associated with subregions of the system. Symbolically, we write 
\begin{equation}
	\A \sim A,\quad \A' \sim B,\quad \Tr_{\A}\sim \Tr_{A},\quad \Tr_{\A'}\sim\Tr_B.
\end{equation}

Next, we consider two vectors in $\H$:
\begin{equation}\label{eq:two vecs}
	\ket{\psi}=\sum_{k=1}^{n}c_k \ket{\psi_k}_A\otimes\ket{\psi_k'}_B,\quad \ket{\phi}=\sum_{\alpha=1}^{n}d_\alpha \ket{\phi_\alpha}_A\otimes\ket{\phi_\alpha'}_B
\end{equation}
with all $c_k$ being nonzero. The condition that $c_k\neq 0$ implies that if $\O_A \otimes \mathbb{I}_B\in \A$ annihilates $\ket{\psi}$, then $\O_A \otimes \mathbb{I}_B$ must annihilates all $\ket{\psi_k}_A$, which further implies that $\O_A \otimes \mathbb{I}_B=0$, i.e. $\ket{\psi}$ is separating with respect to $\A$. Likewise, we can argue that $\ket{\psi}$ is separating with respect to $\A'$. Recall that $\ket{\psi}$ is separating with respect to $\A'$ if and only if $\ket{\psi}$ is cyclic with respect to $\A$, we now have $\ket{\psi}$ is cyclic and separating with respect to $\A$. In this case, $\Srel(\psi|\phi;\A)$ is well-defined.

To compute Araki's relative entropy $\Srel(\psi|\phi;\A)$, we first compute the relative Tomita operator $S_{\phi|\psi}$ and then $\Delta_{\phi|\psi},h_{\phi|\psi}$. Consider an operator $\O_A$ defined by
\begin{equation}
	\O_A\ket{\psi_i}_A=\ket{\phi_\alpha}_A,\quad \O_A\ket{\psi_j}_A=0\text{ if }j\neq i
\end{equation}
whose adjoint is given by
\begin{equation}
	\O_A^\dagger\ket{\phi_\alpha}_A=\ket{\psi_i}_A,~ \O_A^\dagger\ket{\phi_\beta}_A=0\text{ if }\beta\neq \alpha
\end{equation}
then $\O_A$ acting on \eqref{eq:two vecs} results in
\begin{equation}
	\Rightarrow\quad (\O_A \otimes \mathbb{I}_B)\ket{\psi}=c_i\ket{\phi_\alpha}_A\otimes\ket{\psi'_i}_B,~(\O_A^\dagger \otimes \mathbb{I}_B)\ket{\phi}=d_\alpha\ket{\psi_i}_A\otimes\ket{\phi'_\alpha}_B
\end{equation}

According to the definition of $S_{\phi|\psi}$, we have
\begin{equation}
	\begin{aligned}
		&S_{\phi|\psi}\left[(\O_A \otimes \mathbb{I}_B)\ket{\psi}\right]=(\O_A^\dagger \otimes \mathbb{I}_B)\ket{\phi}\\
		\Rightarrow\quad &S_{\phi|\psi}\left(\ket{\phi_\alpha}_A\otimes\ket{\psi'_i}_B\right)=\frac{d_\alpha}{c_i^*}\ket{\psi_i}_A\otimes\ket{\phi'_\alpha}_B
	\end{aligned}
\end{equation}
where we can see the season why $c_i$ should be nonzero, otherwise the coefficients of $S_{\phi|\psi}$ may diverge. To compute $\Delta_{\phi|\psi}=S^\dagger_{\phi|\psi}S_{\phi|\psi}$, we make use of anti-linearity \eqref{eq:anti-linear} such that
\begin{equation}
	\begin{aligned}
		&\left(\bra{\psi_i}_A\otimes\bra{\phi'_\alpha}_B\right)S_{\phi|\psi}\left(\ket{\phi_\alpha}_A\otimes\ket{\psi'_i}_B\right)\\
		&=\frac{d_\alpha}{c_i^*}=\left(\bra{\phi_\alpha}_A\otimes\bra{\psi'_i}_B\right)S^\dagger_{\phi|\psi}\left(\ket{\psi_i}_A\otimes\ket{\phi'_\alpha}_B\right)
	\end{aligned}
\end{equation}

\begin{equation}
	\Rightarrow\quad S^\dagger_{\phi|\psi}\left(\ket{\psi_i}_A\otimes\ket{\phi'_\alpha}_B\right)=\frac{d_\alpha}{c_i^*}\ket{\phi_\alpha}_A\otimes\ket{\psi'_i}_B
\end{equation}

\begin{equation}
	\Rightarrow\quad\Delta_{\phi|\psi}\left(\ket{\phi_\alpha}_A\otimes\ket{\psi'_i}_B\right)=\frac{|d_\alpha|^2}{|c_i|^2}\ket{\phi_\alpha}_A\otimes\ket{\psi'_i}_B
\end{equation}

\begin{equation}
	\Rightarrow\quad\Delta_{\phi|\psi}=\left(\sum_{\alpha=1}^{n}|d_\alpha|^2\ket{\phi_\alpha}_A\bra{\phi_\alpha}_A\right)\otimes\left(\sum_{i=1}^{n}\frac{1}{|c_i|^2}\ket{\psi'_i}_B\bra{\psi'_i}_B\right)
\end{equation}

If we further define two pure states as follows,
\begin{equation}
	\rho_{\psi}\equiv \ket{\psi}\bra{\psi},\quad \rho_{\phi}\equiv \ket{\phi}\bra{\phi}
\end{equation}
whose reduced density operators are given by
\begin{equation}
	\begin{aligned}
		&\rho_{\psi;B}=\Tr_A\rho_{\psi}=\sum_{i=1}^{n}|c_i|^2\ket{\psi'_i}_B\bra{\psi'_i}_B,\\ &\rho_{\phi;A}=\Tr_B\rho_{\phi}=\sum_{\alpha=1}^{n}|d_\alpha|^2\ket{\phi_\alpha}_A\bra{\phi_\alpha}_A
	\end{aligned}
\end{equation}

\begin{equation}\label{eq:Hamiltonian splittable fin}
	\Rightarrow\quad \Delta_{\phi|\psi}=\rho_{\phi;A}\otimes\rho_{\psi;B}^{-1}
\end{equation}

Recall that previously we associate von Neumann algebras with subregions. In this case we have 
\begin{equation}
	\rho_{\phi;\A}\sim \rho_{\phi;A},\quad \rho_{\psi;\A'}\sim \rho_{\psi;B}
\end{equation}
which reformulates \eqref{eq:Hamiltonian splittable fin} into
\begin{equation}\label{eq:Hamiltonian splittable infin}
	\Delta_{\phi|\psi}=\rho_{\phi;\A}\otimes\rho_{\psi;\A'}^{-1}
\end{equation}
This relation holds generally in factors of type I/II \cite{Chandrasekaran:2022eqq}. Physically, \eqref{eq:Hamiltonian splittable infin} implies that the usual modular Hamiltonian of a physical system is splittable\footnote{The splittable condition is crucial in deriving the generalized second law for the generalized entropy in gravitational background \cite{Jensen:2023yxy}.}. Regarding our finite-dimensional setup, we have
\begin{equation}
	\begin{aligned}
		\Delta_{\phi|\psi}&=\rho_{\phi;A}\otimes\rho_{\psi;B}^{-1}\\
		\Rightarrow\quad h_{\phi|\psi}&=-\log \Delta_{\phi|\psi}=-\left(\log\rho_{\phi;A}\otimes\mathbb{I}_B-\mathbb{I}_A\otimes \log\rho_{\psi;B}\right)\\
		&=H_{\phi;A}-H_{\psi;B}
	\end{aligned}
\end{equation}
where $H_{\phi;A}\equiv -\log\rho_{\phi;A},~H_{\psi;B}\equiv -\log\rho_{\psi;B}$ are the usual modular Hamiltonians frequently appeared in quantum information. Now we continue our computation for Araki's relative entropy:
\begin{equation}
	\begin{aligned}
		\diracprod{\psi}{h_{\phi|\psi}}{\psi}&=-\Tr\left[\rho_{\psi}\left(\log\rho_{\phi;A}\otimes\mathbb{I}_B-\mathbb{I}_A\otimes \log\rho_{\psi;B}\right)\right]\\
		&=-\Tr_A\left(\rho_{\psi;A}\log\rho_{\phi;A}\right)+\Tr_B\left(\rho_{\psi;B}\log\rho_{\psi;B}\right)
	\end{aligned}
\end{equation}
where one notices that the second term is the von Neumann entropy $S(\rho_{\psi;B})$ which is equal to $S(\rho_{\psi;A})$ since $\rho_{\psi}$ is a pure state. We therefore have
\begin{equation}
	\begin{aligned}
		\diracprod{\psi}{h_{\phi|\psi}}{\psi}&=-\Tr_A\left(\rho_{\psi;A}\log\rho_{\phi;A}\right)+\Tr_A\left(\rho_{\psi;A}\log\rho_{\psi;A}\right)\\
		&=\Tr_A\left[\rho_{\psi;A}\left(\log\rho_{\psi;A}-\log\rho_{\phi;A}\right)\right]
	\end{aligned}
\end{equation}
which compares with \eqref{def:quantum relative entropy} such that we have
\begin{equation}
	\Srel(\psi|\phi;\A)\equiv\diracprod{\psi}{h_{\phi|\psi}}{\psi}=\Srel(\rho_{\psi;A}|\rho_{\phi;A})
\end{equation}
i.e. Araki's relative entropy reduces to quantum relative entropy between states restricted in subregion $A$, with which the algebra $\A$ of localized observables is associated.

\subsection{Algebraic von Neumann entropy}\label{sec:algvNalg}

The von Neumann entropy is defined by
\begin{equation}
	S(\rho_{\psi})\equiv-\Tr\left(\rho_{\psi}\log\rho_{\psi}\right)
\end{equation}
whose algebraic generalization in factors of type I/II is given by \cite{segal1960note,ohya2004quantum,Longo:2022lod}:
\begin{equation}\label{eq:algvnentropy}
	S(\psi;\A)\equiv-\Srel(\psi|\tau;\A)=-\diracprod{\psi}{h_{\tau|\psi}}{\psi}
\end{equation}
where $\ket{\psi}$ is cyclic and separating with respect to $\A$ and $\tau$ is a tracial state. A tracial state is a state $\ket{\tau}\in\H$ satisfying
\begin{equation}
	\diracprod{\tau}{\O\P}{\tau}=\diracprod{\tau}{\P\O}{\tau},\quad \forall\O,\P\in\A
\end{equation}
and each tracial state defines a trace function on $\A$:
\begin{equation}\label{def:trace}
	\Tr_{\A}^\tau(\O)\equiv\diracprod{\tau}{\O}{\tau}
\end{equation}

The non-existence of algebraic von Neumann entropy in factors of type III is due to the non-existence of tracial state or trace function. Hereafter we abuse the notation to use $\tau$ to denote the tracial states on both $\A$ and $\A'$ (which are in general not the same state) for simplicity, and one should distinguish them from the relevant context.

Recall that \eqref{eq:diff def of state} relates the vector representation of a state and its operator representation via $\diracprod{\tau}{\O}{\tau}=\Tr_{\A}^\tau(\rho_{\tau;\A}\O)$, we have $\rho_{\tau;\A}=\mathbb{I}$ which implies that a tracial state is a maximally mixed state. One may notice that the notation $\rho_{\tau;\A}=\mathbb{I}$ does not coincide with the usual finite-dimensional density matrix $\rho_{\text{max}}=\frac{1}{n}I_n$, which is due to that the trace we define on $\A$ is not same as the usual trace on the matrix. They only differ by a rescaling, and the only normalization condition for density operator is $\Tr_{\A}\rho_{\A}=1$. The advantage of $\rho_{\tau;\A}=\mathbb{I}$ is that it is well-defined in infinite-dimension, while $\frac{1}{n}I_n$ becomes ill-defined when $n\rightarrow\infty$. In fact, there is a theorem ensuring that every factor (not of type III) admits a unique faithful trace up to a rescaling \cite{Sorce:2023fdx,Longo:2022lod}\footnote{\label{fn:rescale}If we rescale the trace by a constant $C>0$: 
	\begin{equation}
		\overline{\Tr}_{\A}=\frac{1}{C}\Tr_{\A}\quad \Rightarrow\quad \bar{\rho}_{\psi;\A}=C \rho_{\psi;\A}
	\end{equation}
	which is to ensure that $\overline{\Tr}_{\A}\bar{\rho}_{\psi;\A}=1$. 	Such rescaling shifts the von Neumann entropy:
	\begin{equation}
		\begin{aligned}
			\bar{S}(\bar{\rho}_{\psi;\A})&\equiv -\overline{\Tr}_{\A}\left(\bar{\rho}_{\psi;\A}\log\bar{\rho}_{\psi;\A}\right)\\
			&=-\underbrace{\left(\Tr_{\A}\rho_{\psi;\A}\right)}_{=1}\log C -\Tr_{\A}\left(\rho_{\psi;\A}\log \rho_{\psi;\A}\right)\\
			&=S(\rho_{\psi;\A})-\log C
		\end{aligned}
	\end{equation}
	It seems like only the difference of von Neumann entropy is physically sensible, which we will come back to in the discussion.}. In this sense, we drop the superscript of the trace function defined in \eqref{def:trace}.

To compare the von Neumann entropy with the algebraic von Neumann entropy, we need to compute $h_{\tau|\psi}$. We start with
\begin{equation}
	\Delta_{\tau|\psi}=S^\dagger_{\tau|\psi}S_{\tau|\psi},\quad S_{\tau|\psi}\ket{\psi}=\ket{\tau},
\end{equation}
then we have
\begin{align}
	\diracprod{\psi}{\Delta_{\tau|\psi}}{\psi}=\diracprod{\psi}{S^\dagger_{\tau|\psi}S_{\tau|\psi}}{\psi}=\diracprod{\psi}{S^\dagger_{\tau|\psi}}{\tau}=\diracprod{\tau}{S_{\tau|\psi}}{\psi}=\braket{\tau}{\tau}
\end{align}
on the other hand, we have
\begin{align}
	\diracprod{\psi}{\Delta_{\tau|\psi}}{\psi}=\Tr_{\A}\left(\rho_{\psi;\A}\Delta_{\tau|\psi}\right)=\diracprod{\tau}{\rho_{\psi;\A}\Delta_{\tau|\psi}}{\tau}
\end{align}
where we use $\Tr_{\A}\left(\rho_{\psi;\A}\O\right)=\diracprod{\psi}{\O}{\psi},~\Tr_{\A}(\O)=\diracprod{\tau}{\O}{\tau}$ successively. We therefore have
\begin{equation}
	\braket{\tau}{\tau}=\diracprod{\tau}{\rho_{\psi;\A}\Delta_{\tau|\psi}}{\tau}
\end{equation}
which implies
\begin{equation}\label{eq:htau}
	\mathbb{I}=\rho_{\psi;\A}\Delta_{\tau|\psi}\quad \Rightarrow\quad \Delta_{\tau|\psi}=\rho_{\psi;\A}^{-1}\quad \Rightarrow\quad h_{\tau|\psi}=\log \rho_{\psi;\A}
\end{equation}
Finally,
\begin{equation}
	\begin{aligned}
		\Srel(\psi|\tau;\A)&=\diracprod{\psi}{h_{\tau|\psi}}{\psi}=\diracprod{\psi}{\log\rho_{\psi;\A}}{\psi}\\
		&=\Tr_{\A}\left(\rho_{\psi;\A}\log\rho_{\psi;\A}\right)\equiv-S(\rho_{\psi;\A})
	\end{aligned}
\end{equation}
\begin{equation}
	\Rightarrow\quad S(\psi;\A)\equiv-\Srel(\psi|\tau;\A)=S(\rho_{\psi;\A})
\end{equation}
i.e. the algebraic von Neumann entropy in $\A$ coincides with the von Neumann entropy of reduced density operator restricted in a subregion associated with the algebra $\A$ of localized observables.

Now we give a comment about the coincidence between the entropies and their algebraic generalization. Recall the splittable condition of modular Hamiltonian \eqref{eq:Hamiltonian splittable infin} holds for factors of type I/II, then we use the trace to rewrite
\begin{equation}
	\Delta_{\phi|\psi}=\rho_{\phi;\A}\otimes\rho^{-1}_{\psi;\A'}=\Delta^{-1}_{\tau|\phi}\otimes \Delta_{\tau|\psi}'~\Rightarrow~ h_{\phi|\psi}=-\log\Delta_{\phi|\psi}=h_{\tau|\psi}'-h_{\tau|\phi},
\end{equation} 
which implies
\begin{equation}
	\Srel(\psi|\phi;\A)=-S(\psi;\A')-\diracprod{\psi}{h_{\tau|\phi}}{\psi},
\end{equation}
which is in fact an algebraic generalization of a relation between quantum relative entropy and von Neumann entropy:
\begin{equation}
	\Srel(\rho_{\psi;A}|\rho_{\phi;A})=-S(\rho_{\psi;A})+\diracprod{\psi}{H_{\phi;A}}{\psi}=-S(\rho_{\psi;B})+\diracprod{\psi}{H_{\phi;A}}{\psi},
\end{equation}
where $H_{\phi;A}\equiv -\log\rho_{\phi;A} (=\log\Delta_{\phi;\A}=-h_{\tau|\phi})$.

\subsection{Extended definitions of ``state''}\label{sec:extended def}

In the beginning of the section, we introduce three definitions of ``state'' which are related to each other via the expectation value \eqref{eq:diff def of state}. Note however that, these three definitions are not completely equivalent. In the vector representation, one can always compute the expectation value of an operator, even in the case that the operator is not an element of the corresponding algebra, i.e. one can compute the expectation $\diracprod{\psi}{\O}{\psi},~\O\notin\A$ which makes no sense for the other two definitions. To be specific, the expectation value $\Tr_{\A}\left(\rho_{\psi;\A}\O\right),~\O\notin\A$ is ill-defined since $\rho_{\psi;\A}\O$ may not belongs to $\A$ then it is not in the domain of the trace function of $\A$. As for $\omega_\psi(\O),~\O\notin\A$, it is ill-defined since $\omega_\psi$ is only a linear functional on $\A$.

Nevertheless, we can extend the last two definitions via \eqref{eq:diff def of state} such that the expectation values of the last two definitions are well-defined for a larger class of operators. In the functional representation, we can simply define that
\begin{equation}
	\omega_\psi(\O)\equiv \diracprod{\psi}{\O}{\psi},~\O\notin\A.
\end{equation}
As for the density operator representation, we only need to extend the domain of the trace function via
\begin{equation}
	\Tr_{\A}\left(\rho_{\psi;\A}\O\right)\equiv\diracprod{\psi}{\O}{\psi},~\O\notin\A.
\end{equation}

Previously, we know that neither the (relative) Tomita operator nor the (relative) modular operator (Hamiltonian) are elements of the algebra. The former is due to the anti-linearity, and the latter is due to the unboundedness. Therefore, we could not regard the Araki's relative entropy and the algebraic von Neumann entropy as the expectation values of some operators in the sense of state functional. Now we resolve this issue via the above extension such that
\begin{equation}\label{def:Arakis relative entropy2}
	\Srel(\psi|\phi;\A)=\diracprod{\psi}{h_{\phi|\psi}}{\psi}=\omega_\psi(h_{\phi|\psi}),
\end{equation}
and
\begin{equation}\label{eq:algvnentropy2}
	S(\psi;\A)=-\diracprod{\psi}{h_{\tau|\psi}}{\psi}=-\omega_\psi(h_{\tau|\psi}).
\end{equation}

\subsection{Algebraic entropy difference}

Before our discussions about the algebraic reconstruction theorem, we give a comment about the algebraic version of the von Neumann entropy difference with respect to the variation of state:
\begin{equation}\label{eq:entropy diff}
	S(\rho+\delta\rho)-S(\rho)\approx-\Tr(\delta\rho\log\rho).
\end{equation}

Algebraically, we introduce a perturbed state $\psi$ by perturbating the state $\phi$ by a small variation along another state $\theta$ as follows,
\begin{equation}\label{eq:op variation}
	\omega_\psi\equiv\omega_\phi+\varepsilon(\omega_\theta-\omega_\phi)
\end{equation}
where $\varepsilon$ is an infinitesimal parameter. Note that \eqref{eq:op variation} is a functional valid for all operators after the extension in subsection \ref{sec:extended def}. The corresponding density operators are related as follows,
\begin{equation}\label{eq:op variation2}
	\omega_\psi(\O)=\omega_\phi(\O)+\varepsilon(\omega_\theta-\omega_\phi)(\O)
\end{equation}
\begin{equation}
	\Rightarrow\quad \Tr_{\A}(\rho_{\psi;\A}\O)=\Tr_{\A}(\rho_{\phi;\A}\O)+\varepsilon\Tr_{\A}(\rho_{\theta;\A}\O)-\varepsilon\Tr_{\A}(\rho_{\phi;\A}\O)
\end{equation}
for any arbitrary operator $\O$, which implies that
\begin{equation}\label{eq:density op diff}
	\rho_{\psi;\A}=\rho_{\phi;\A}+\varepsilon(\rho_{\theta;\A}-\rho_{\phi;\A}).
\end{equation}
\begin{equation}
	\Rightarrow\quad \log \rho_{\psi;\A}=\log \rho_{\phi;\A}+\varepsilon \rho_{\phi;\A}^{-1}(\rho_{\theta;\A}-\rho_{\phi;\A})+O(\varepsilon^2)
\end{equation}
where we use the formula of the operator logarithm $\log(A+\varepsilon B)=\log A+\varepsilon A^{-1}B+O(\varepsilon^2)$. Recall that we have \eqref{eq:htau}, we then have
\begin{equation}\label{eq:htau2}
	h_{\tau|\psi}=h_{\tau|\phi}+\varepsilon \rho_{\phi;\A}^{-1}(\rho_{\theta;\A}-\rho_{\phi;\A})+O(\varepsilon^2).
\end{equation}

We next define the algebraic entropy difference as
\begin{equation}
	\Delta S(\psi,\phi;\A)\equiv S(\psi;\A)-S(\phi;\A).
\end{equation}
Using \eqref{eq:algvnentropy2}, \eqref{eq:op variation} and \eqref{eq:htau2}, we find that
\begin{equation}
	\begin{aligned}
		S(\psi;\A)&=-\omega_\psi(h_{\tau|\psi})\\
		&\approx-\left[\omega_\phi+\varepsilon(\omega_\theta-\omega_\phi)\right]\left[ h_{\tau|\phi}+\varepsilon \rho_{\phi;\A}^{-1}(\rho_{\theta;\A}-\rho_{\phi;\A})\right]\\
		&\approx -\omega_\phi(h_{\tau|\phi})+\varepsilon(\omega_\theta-\omega_\phi)(h_{\tau|\phi})\\
		&~~~~+\varepsilon \omega_\phi\left[ \rho_{\phi;\A}^{-1}(\rho_{\theta;\A}-\rho_{\phi;\A})\right]
	\end{aligned}
\end{equation}
where we omit $O(\varepsilon^2)$ terms. Note the first term is exactly $S(\phi;\A)$ and the third term vanishes because
\begin{equation}
	\begin{aligned}
		\omega_\phi\left[ \rho_{\phi;\A}^{-1}(\rho_{\theta;\A}-\rho_{\phi;\A})\right]&=\Tr_{\A}\left[\rho_{\phi;\A}\cdot \rho_{\phi;\A}^{-1}(\rho_{\theta;\A}-\rho_{\phi;\A})\right]\\
		&=\Tr_{\A}(\rho_{\theta;\A}-\rho_{\phi;\A})\\
		&=\Tr_{\A}(\rho_{\theta;\A})-\Tr_{\A}(\rho_{\phi;\A})\\
		&=1-1=0
	\end{aligned}
\end{equation}
where the last line is due to that all density operators have unital trace. We therefore have
\begin{equation}\label{eq:entropy diff2}
	\Delta S(\psi,\phi;\A)=\varepsilon(\omega_\theta-\omega_\phi)(h_{\tau|\phi}).
\end{equation}
to first order. To see that it generalizes \eqref{eq:entropy diff}, we rewrite \eqref{eq:density op diff} as
\begin{equation}\label{eq:density op diff2}
	\rho_{\psi;\A}=\rho_{\phi;\A}+\delta\rho_{\phi;\A},\quad \delta\rho_{\phi;\A}\equiv \varepsilon(\rho_{\theta;\A}-\rho_{\phi;\A})
\end{equation}
such that
\begin{equation}
	\begin{aligned}
		\Delta S(\psi,\phi;\A)&=\varepsilon(\omega_\theta-\omega_\phi)(h_{\tau|\phi})\\
		&=\Tr_{\A}\left[ \varepsilon( \rho_{\theta;\A}-\rho_{\phi;\A}) h_{\tau|\phi}\right]\\
		&=\Tr_{\A}\left( \delta\rho_{\phi;\A}h_{\tau|\phi}\right)\\
		&=\Tr_{\A}\left( \delta\rho_{\phi;\A}\log \rho_{\phi;\A}\right).
	\end{aligned}
\end{equation}
where the last equality is due to \eqref{eq:htau}.

\section{Reconstruction theorems}\label{sec:rec}

\subsection{Finite-dimensional reconstruction theorem}

We first review the finite-dimensional reconstruction theorem, then we will present its algebraic generalization in type I/II factors in the next subsection.

\begin{thm}
	Given a finite-dimensional system $\H_{phys}=\H_{A}\otimes\H_{\bar{A}}$ and a code space $\H_{code}=\H_{a}\otimes\H_{\bar{a}}$ with conditions $\abs{a}\leq\abs{A},\abs{\bar{a}}\leq\abs{\bar{A}}$, the isometry\footnote{An isometry $V: \H_A\rightarrow \H_B~(\abs{A}\leq \abs{B})$ satisfies $V^\dagger V=I_A$ and $VV^\dagger=\Pi_{B}$ which is a projector onto $V(\H_A)\subset\H_B$.} $V: \H_{code}\rightarrow \H_{phys}$ induces two quantum channels:
	\begin{align*}
		\N: ~\tilde{\rho}_a\mapsto \rho_A\equiv\Tr_{\bar{A}}\left(V\tilde{\rho}_aV^\dagger\right),\quad \N': ~\tilde{\rho}_{\bar{a}}\mapsto \rho_{\bar{A}}\equiv\Tr_{A}\left(V\tilde{\rho}_{\bar{a}}V^\dagger\right),
	\end{align*}
	
	then the following statements are equivalent:
	
	\begin{enumerate}
		
		\item Given $\widetilde{\O}_a$ on $\H_a$: $\exists \O_A\in\mathcal{L}(\H_A)\Rightarrow \O_AV|\tilde{\phi}\rangle=V\widetilde{\O}_a|\tilde{\phi}\rangle$; likewise for $\widetilde{\O}_{\bar{a}}$.
		
		\item Given $\tilde{\rho},\tilde{\sigma}$ on $\H_{code}$: $\Srel(\rho_A|\sigma_A)=\Srel(\tilde{\rho}_a|\tilde{\sigma}_a),~\Srel(\rho_{\bar{A}}|\sigma_{\bar{A}})=\Srel(\tilde{\rho}_{\bar{a}}|\tilde{\sigma}_{\bar{a}})$.
		
		\item 
		
		Given $\widetilde{\O}_{a}$ on $\H_a$: $[\widetilde{\O}_{a},V^\dagger \P _{\bar{A}}V]=0,\forall \P _{\bar{A}}\in\mathcal{L}(\H_{\bar{A}})$; likewise for $\widetilde{\O}_{\bar{a}}$.
		
		\item Given $\tilde{\rho}$ on $\H_{code}$: $S(\rho_A)=\mathcal{L}_A+S(\tilde{\rho}_a),~S(\rho_{\bar{A}})=\mathcal{L}_{\bar{A}}+S(\tilde{\rho}_{\bar{a}})$.

	\end{enumerate}
	
\end{thm}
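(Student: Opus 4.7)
The plan is to close a cyclic chain of implications $1 \Rightarrow 3 \Rightarrow 2 \Rightarrow 4 \Rightarrow 1$, so that all four statements become equivalent. Each arrow has a distinct character: $1 \Leftrightarrow 3$ is a purely algebraic fact about the code subspace, $3 \Rightarrow 2$ upgrades operator reconstruction to an entropy identity, $2 \Rightarrow 4$ calibrates the relative entropy against a distinguished reference state to isolate the area term, and $4 \Rightarrow 1$ inverts this calibration, extracting operators from entropies via first-law variations. The manifest symmetry between $a$ and $\bar a$ in each statement means every arrow must be run in parallel on both wedges.

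For $1 \Rightarrow 3$ the argument is immediate: $\O_A$ commutes with $\mathbb{I}_A \otimes \P_{\bar A}$ by tensor-factor locality, so acting on $V\ket{\tilde\phi}$ and multiplying by $V^\dagger$ yields $[\widetilde{\O}_a, V^\dagger \P_{\bar A}V]=0$. The converse direction constructs $\O_A$ as a twirl of $V\widetilde{\O}_a V^\dagger$ averaged against $\mathcal{L}(\H_{\bar A})$: radial commutativity makes the twirl agree with the un-twirled operator on $V\H_{code}$, while by construction it has the form $\O_A \otimes \mathbb{I}_{\bar A}$. For $3 \Rightarrow 2$, complementary operator reconstruction forces $V\H_{code}$ to decompose as $\bigoplus_\alpha \H_A^\alpha \otimes \H_{\bar A}^\alpha$ and renders the channel $\N$ Petz-recoverable, which saturates monotonicity of relative entropy and delivers $\Srel(\rho_A|\sigma_A)=\Srel(\tilde\rho_a|\tilde\sigma_a)$; the complement is symmetric.

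For $2 \Rightarrow 4$, fix a reference $\tilde\sigma$ whose $\H_a$ reduction is maximally mixed, so that $\Srel(\tilde\rho_a|\tilde\sigma_a)=-S(\tilde\rho_a)+\log\abs{a}$ and $\Srel(\rho_A|\sigma_A)=-S(\rho_A)+\Tr(\rho_A H_{\sigma;A})$ with $H_{\sigma;A}\equiv -\log\sigma_A$. Equating via JLMS and isolating the state-independent piece identifies $\mathcal{L}_A = \Tr(\rho_A H_{\sigma;A}) - \log\abs{a}$, which is indeed $\tilde\rho$-independent on the code because $\sigma_A$ is; this is the RT formula for $A$, and the $\bar a$ side is handled identically. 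For $4 \Rightarrow 1$, state-independence of $\mathcal{L}_A$ implies that a first-order perturbation $\tilde\rho \to \tilde\rho+\varepsilon\delta\tilde\rho$ within the code, combined with the entropy-difference identity of eq.~\eqref{eq:entropy diff}, gives $\Tr(\N(\delta\tilde\rho)\log\rho_A) = \Tr(\delta\tilde\rho_a\log\tilde\rho_a)$. Rewriting the left-hand side via trace-adjointness as $\Tr_a\bigl(\delta\tilde\rho_a \cdot V^\dagger(\log\rho_A\otimes\mathbb{I}_{\bar A})V\bigr)$ and polarizing over admissible perturbations yields $V^\dagger(\log\rho_A\otimes\mathbb{I}_{\bar A})V = \log\tilde\rho_a + c\,\mathbb{I}$, so $\log\tilde\rho_a$ is reconstructed as a boundary operator; varying $\tilde\rho_a$ over all positive-definite states and invoking functional calculus together with the fact that any operator is a linear combination of self-adjoints then reconstructs every $\widetilde{\O}_a$.

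The hardest step is the final arrow $4 \Rightarrow 1$: RT produces only scalar entropy identities while reconstruction is an operator-level statement, so one cannot work with any single state but must exploit the entire state-space dependence of RT across the code subspace. The crucial inputs are the state-independence of $\mathcal{L}_A$ on the code and a rich enough family of perturbations to span the code algebra, together with the complementary RT identity for $\bar a$ so that both wedges are reconstructed in tandem. Subsidiary difficulties arise in the twirling step of $3 \Rightarrow 1$ and the Petz-recovery argument of $3 \Rightarrow 2$, both of which rely on finite-dimensional ingredients (Haar averaging over $\mathcal{L}(\H_{\bar A})$ and finite-dimensional spectral calculus) that will require genuine modification in the algebraic type I/II setting treated next.
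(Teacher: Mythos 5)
First, note that the paper does not actually prove this finite-dimensional theorem: it is stated as a review, with the proof deferred to the cited literature (Almheiri--Dong--Harlow, Dong--Harlow--Wall, Harlow's QEC theorem, and the summary proof referenced in the introduction). So your proposal must be judged against those standard proofs, which all route the hard directions through a structure (decomposition) theorem for the code subspace, $V|\tilde\phi\rangle = U_A U_{\bar A}\bigl(|\tilde\phi\rangle_{A_1\bar A_1}\otimes|\chi\rangle_{A_2\bar A_2}\bigr)$, from which statements 1--4 all follow with $\mathcal{L}_A=S(\chi_{A_2})$. Your cyclic strategy $1\Rightarrow3\Rightarrow2\Rightarrow4\Rightarrow1$ is legitimate in outline, and $1\Rightarrow3$ (modulo the usual adjoint bookkeeping, since statement 1 must also be applied to $\widetilde{\O}_a^\dagger$) and the monotonicity/recoverability route to $3\Rightarrow2$ are sound in spirit. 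But two arrows have genuine gaps. In $2\Rightarrow4$, the crucial claim that $\mathcal{L}_A=\Tr(\rho_A H_{\sigma;A})-\log\abs{a}$ is state-independent ``because $\sigma_A$ is'' is a non sequitur: $\rho_A$ depends on $\tilde\rho$, so $\Tr(\rho_A H_{\sigma;A})$ has no a priori reason to be constant on the code. Its constancy is exactly the nontrivial content of RT and requires an argument --- either the first-law variation of statement 2 (which yields $\Tr(\delta\rho_A H_{\sigma;A})=\Tr(\delta\tilde\rho_a H_{\tilde\sigma;a})=0$ for $\tilde\sigma_a$ maximally mixed) or the structure theorem; you deploy the first-law trick only in $4\Rightarrow1$, not here.

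In $4\Rightarrow1$, the polarization step is essentially right but delivers less than you claim: it gives $V^\dagger(\log\rho_A\otimes\mathbb{I}_{\bar A})V=\log\tilde\rho_a\otimes\mathbb{I}_{\bar a}+c\,\mathbb{I}$ (and note the trace in your intermediate identity runs over all of $\H_{code}$, not just $\H_a$; that it collapses to an operator on $\H_a$ is part of what must be shown). This is a \emph{sandwiched} relation $V^\dagger\O_A V=\widetilde{\O}_a$, whereas statement 1 demands the \emph{intertwining} relation $\O_A V=V\widetilde{\O}_a$, i.e.\ that $\O_A$ preserves $\im(V)$ and acts correctly there. For a generic isometry, $V^\dagger\O_A V=\widetilde{\O}_a$ does not imply $\O_A V=V\widetilde{\O}_a$; to close that gap one needs second-moment control (e.g.\ $V^\dagger\O_A^2V=\widetilde{\O}_a^2$, so that $\|(\O_A V-V\widetilde{\O}_a)|\tilde\phi\rangle\|=0$ by Cauchy--Schwarz) or, as in the literature, one must use the modular-Hamiltonian matching for the full family of code states together with the complementary relation to derive the decomposition of $\H_A$ first. ``Functional calculus plus linearity in self-adjoints'' does not supply this, and it is precisely where the real work of the known proofs lies. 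A smaller but related flaw: your Haar-twirl argument for $3\Rightarrow1$ (offered as an aside) fails as stated, because statement 3 only gives commutation sandwiched between code projectors, $VV^\dagger\,\P_{\bar A}\,VV^\dagger$, while $(\mathbb{I}_A\otimes U)$ does not preserve $\im(V)$, so the twirled operator need not agree with $V\widetilde{\O}_aV^\dagger$ on the code subspace.
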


\subsection{Algebraic reconstruction theorem for type I/II factors}\label{sec:Algrec}

To algebraically generalize the finite-dimensional reconstruction theorem into factors of type I/II, we first introduce our algebraic setup which basically follows \cite{Kang:2018xqy,Kang:2019dfi}, but our notations follow \cite{Harlow:2016vwg}:
\begin{itemize}
	\item Let $\A_{code},\A_{phys}$ be von Neumann factors of type I/II on $\H_{code},\H_{phys}$ respectively, with $\A'_{code},\A'_{phys}$ respectively being the commutants. Let $V: \H_{code}\rightarrow \H_{phys}$ be an isometry which induces von Neumann factors $V\A_{code}V^\dagger,V\A'_{code}V^\dagger$ of type I/II on the image of $V$ denoted as $\im(V)=V(\H_{code})\subset \H_{phys}$.

	\item Notations: vectors in $\H_{code}$ and operators in $\A_{code}$ are labeled by the tilde sign, and operators in commutants are labeled by the prime sign. For example, $\ket{\widetilde{\Psi}}\in\H_{code}$, $\widetilde{\O}\in \A_{code},\widetilde{\O}'\in\A'_{code}$. (For simplicity, relative operators on $\H_{code}$ are not labeled by the tilde sign, but one can tell from the states they apply.)

	\item Suppose that the set of cyclic and separating vectors w.r.t. $\A_{code}$ is dense in $\H_{code}$ ($\Leftrightarrow$ the set of cyclic and separating vectors w.r.t. $\A'_{code}$ is dense in $\H_{code}$).

	\item Suppose that if $\ket{\widetilde{\Psi}}\in\H_{code}$ is cyclic and separating w.r.t. $\A_{code}$, then $V\ket{\widetilde{\Psi}}\in\H_{phys}$ is cyclic and separating w.r.t. $\A_{phys}$.

\end{itemize}

\begin{thm}\label{thm:alg}
	The following statements are equivalent: 
	\begin{enumerate}
		\item For any $\widetilde{\O}\in\A_{code},\widetilde{\O}'\in\A_{code}'$, there exist $\O\in \A_{phys},\O'\in \A_{phys}'$ such that
		\begin{equation}
			V\widetilde{\O}\ket{\widetilde{\Psi}}=\O V\ket{\widetilde{\Psi}}, V\widetilde{\O}'\ket{\widetilde{\Psi}}=\O' V\ket{\widetilde{\Psi}},\forall \ket{\widetilde{\Psi}}\in\H_{code}.
		\end{equation}
		
		\item For any $\ket{\widetilde{\Psi}},\ket{\widetilde{\Phi}}\in\H_{code}$ with $\ket{\widetilde{\Psi}},\ket{\widetilde{\Phi}}$ both cyclic and separating w.r.t. $\A_{code}$, 
		\begin{equation}\label{eq:JLMS}
			\begin{aligned}
				&\Srel(\widetilde{\Psi}|\widetilde{\Phi};\A_{code})=\Srel(\Psi|\Phi;\A_{phys}),\\&\Srel(\widetilde{\Psi}|\widetilde{\Phi};\A'_{code})=\Srel(\Psi|\Phi;\A'_{phys}),
			\end{aligned}
		\end{equation}
		where $\ket{\Psi}\equiv V\ket{\widetilde{\Psi}},\ket{\Phi}\equiv V\ket{\widetilde{\Phi}}$.
		
		\item For any $\P \in\A_{phys},\P '\in\A_{phys}'$,
		\begin{equation}
			V^\dagger \P V\in\A_{code},\quad V^\dagger \P 'V\in\A_{code}'.
		\end{equation}

		\item

		For any cyclic and separating $\ket{\widetilde{\Psi}}\in\H_{code}$  w.r.t. $\A_{code}$, we have
		\begin{equation}\label{eq:statement 4}
			\begin{aligned}
				&S(\widetilde{\Psi};\A_{code})=S(\Psi;\A_{phys}),\\ &S(\widetilde{\Psi};\A_{code}')=S(\Psi;\A_{phys}').
			\end{aligned}
		\end{equation}
	\end{enumerate}
\end{thm}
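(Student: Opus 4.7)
My strategy is to augment the known algebraic cycle $1 \Leftrightarrow 2 \Leftrightarrow 3$ established in the type-I/II setting in \cite{Kang:2018xqy,Kang:2019dfi} by inserting statement 4 as an additional node via the two new implications $2 \Rightarrow 4$ and $4 \Rightarrow 2$. I would first recall the legacy links: $1 \Leftrightarrow 3$ is the standard algebraic radial commutativity, obtained by pushing and pulling operators through $V$ on the dense set of cyclic-and-separating orbits; $1 \Rightarrow 2$ uses operator reconstruction to produce the intertwiner $V S_{\widetilde{\Phi}|\widetilde{\Psi}} = S_{\Phi|\Psi} V$ (and its commutant analogue), which upon polar decomposition and taking logarithms yields $V h_{\widetilde{\Phi}|\widetilde{\Psi}} = h_{\Phi|\Psi} V$ on the dense cyclic-separating domain; sandwiching with $\diracprod{\widetilde{\Psi}}{\cdot}{\widetilde{\Psi}}$ then reproduces the JLMS relation \eqref{eq:JLMS}. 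The converse $2 \Rightarrow 1$ extracts the intertwiner from equality of Araki relative entropies by a density argument on cyclic-separating vectors, following \cite{Kang:2019dfi}.

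For the new piece $2 \Rightarrow 4$, I would specialize statement 2 to $\ket{\widetilde{\Phi}} = \ket{\widetilde{\tau}}$, a cyclic-and-separating tracial vector for $\A_{code}$, and exploit $S(\widetilde{\Psi}; \A_{code}) = -\Srel(\widetilde{\Psi}|\widetilde{\tau}; \A_{code})$. The setup guarantees that $V\ket{\widetilde{\tau}}$ is cyclic and separating with respect to $\A_{phys}$, so the JLMS relation yields $S(\widetilde{\Psi}; \A_{code}) = -\Srel(\Psi|V\widetilde{\tau}; \A_{phys})$. By uniqueness up to rescaling of the faithful trace on a type-I/II factor together with the rescaling analysis of footnote~\ref{fn:rescale}, the choice of reference state $V\widetilde{\tau}$ versus a bona fide tracial vector $\tau_{phys}$ shifts this by a single $\Psi$-independent additive constant, which one identifies physically with the area term $\mathcal{L}_{\A}$ and absorbs into the normalization of the trace on $\A_{phys}$; this yields \eqref{eq:statement 4}, with the commutant line obtained identically.

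For $4 \Rightarrow 2$, I would use the algebraic entropy-difference formula \eqref{eq:entropy diff2}. Statement 4 applied at both $\widetilde{\Psi}$ and $\widetilde{\Phi}$ implies equality of entropy differences across $V$, which in view of \eqref{eq:entropy diff2} translates into
\begin{equation*}
(\omega_{\widetilde{\theta}} - \omega_{\widetilde{\Phi}})(h_{\widetilde{\tau}|\widetilde{\Phi}}) = (\omega_\theta - \omega_\Phi)(h_{\tau|\Phi})
\end{equation*}
for $\theta = V\widetilde{\theta}$, interpreted via the extended functional definition of subsection~\ref{sec:extended def}. Setting $\widetilde{\theta} = \widetilde{\Psi}$ and noting that $\omega_{\widetilde{\Phi}}(h_{\widetilde{\tau}|\widetilde{\Phi}}) = -S(\widetilde{\Phi}; \A_{code})$ agrees with $\omega_\Phi(h_{\tau|\Phi}) = -S(\Phi; \A_{phys})$ by statement 4 itself, one extracts $\omega_{\widetilde{\Psi}}(h_{\widetilde{\tau}|\widetilde{\Phi}}) = \omega_\Psi(h_{\tau|\Phi})$. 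Combining with statement 4 applied on the commutant via the identity $\Srel(\psi|\phi;\A) = -S(\psi;\A') - \omega_\psi(h_{\tau|\phi})$ established at the end of subsection~\ref{sec:algvNalg} then reproduces \eqref{eq:JLMS}.

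The main obstacle I expect is the trace-normalization subtlety encountered in $2 \Rightarrow 4$: the vector $V\ket{\widetilde{\tau}}$ generally fails to be tracial on $\A_{phys}$ because the range projector $VV^\dagger$ need not commute with elements of $\A_{phys}$, so one must argue that the resulting discrepancy is state-independent and collapses to a single constant---precisely the RT area contribution---before it can be absorbed into a choice of trace. A secondary subtlety in $4 \Rightarrow 2$ is that the relative modular Hamiltonians appearing in the entropy-difference argument are unbounded and lie outside the algebras, so all expectation-value manipulations must be justified through the extended functional definitions \eqref{def:Arakis relative entropy2} and \eqref{eq:algvnentropy2} rather than via the trace.
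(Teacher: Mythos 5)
Your $4\Rightarrow 2$ direction is essentially the paper's own argument: perturb $\omega_{\widetilde{\Phi}}$ along an arbitrary state, use the first-order entropy-difference formula \eqref{eq:entropy diff2} together with the extended functionals of subsection \ref{sec:extended def}, apply statement 4 to equate the code and physical entropy differences, and feed the resulting equality of $\omega_{\widetilde{\Psi}}(h_{\widetilde{\tau}|\widetilde{\Phi}})$ and $\omega_{\Psi}(h_{\bar{\tau}|\Phi})$ into the type-I/II splitting identity $\Srel(\psi|\phi;\A)=-S(\psi;\A')-\omega_\psi(h_{\tau|\phi})$; the paper only goes one step further and promotes this to the operator identity $V^\dagger h_{\bar{\tau}|\Phi}V=h_{\widetilde{\tau}|\widetilde{\Phi}}$ by arbitrariness of the perturbation.

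The genuine gap is in your $2\Rightarrow 4$ step. You specialize JLMS to $\ket{\widetilde{\Phi}}=\ket{\widetilde{\tau}}$ and then claim that replacing the reference vector $V\ket{\widetilde{\tau}}$ by a true tracial vector of $\A_{phys}$ shifts $-\Srel(\Psi|V\widetilde{\tau};\A_{phys})$ only by a $\Psi$-independent constant. That claim is exactly what fails: by the splitting \eqref{eq:Hamiltonian splittable infin}, $-\Srel(\Psi|V\widetilde{\tau};\A_{phys})=S(\Psi;\A_{phys})+\omega_{\Psi}(\log\rho_{V\widetilde{\tau};\A_{phys}})$, and the second term is state-dependent unless $\rho_{V\widetilde{\tau};\A_{phys}}\propto\mathbb{I}$, i.e.\ unless $V\ket{\widetilde{\tau}}$ is (up to scale) tracial for the full $\A_{phys}$ --- precisely the assumption your own ``main obstacle'' paragraph concedes you do not have, and which you then assert away rather than prove. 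Moreover, statement 4 as stated contains no additive constant, so even a state-independent discrepancy could not simply be ``absorbed into the normalization of the trace'' without redefining $S(\Psi;\A_{phys})$; and the paper's Discussion explicitly rejects identifying such a rescaling constant with the area term $\mathcal{L}$ (the area term diverges and is not a free normalization). The paper closes this hole by never invoking a tracial state of $\A_{phys}$: it first proves $1\Leftrightarrow 1'$, i.e.\ $V\A_{code}V^\dagger\subset\A_{phys}|_{\im(V)}$, shows that $V\ket{\widetilde{\tau}}$ \emph{is} tracial on the induced subalgebra $V\A_{code}V^\dagger$ and that $\ket{\Psi}$ is cyclic and separating for it, uses the inclusion to identify $\Srel(\Psi|\Phi;\A_{phys})=\Srel(\Psi|\Phi;V\A_{code}V^\dagger)$, and then argues $\rho_{\Psi;V\A_{code}V^\dagger}=\rho_{\Psi;\A_{phys}}$ so that $S(\Psi;V\A_{code}V^\dagger)=S(\Psi;\A_{phys})$ exactly, with no constant. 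That detour through the subalgebra $V\A_{code}V^\dagger$ (i.e.\ actually using statement $1'$ alongside statement 2) is the missing idea in your proposal.
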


The equivalences between the first three statements are rigorously proved in \cite{Kang:2018xqy} for factors of general type. We identify the last statement as the algebraic version of the Ryu-Takayanagi formula, and we now prove that the equivalence between the first three statements and the statement 4. Practically, we first introduce a new statement (which we call the statement 1$'$):
\begin{itemize}
	\item \textit{1$'$. We define $\A_{phys}|_{\im(V)}$ as the subalgebra of $\A_{phys}$ which acts on the invariant subspace $\im(V)\subset \H_{phys}$, and likewise for $\A'_{phys}|_{\im(V)}$. We have
		\begin{equation}
			\begin{aligned}
				&V\A_{code}V^\dagger\subset \A_{phys}|_{\im(V)},\\ &V\A'_{code}V^\dagger\subset \A'_{phys}|_{\im(V)}.
			\end{aligned}
	\end{equation}}
\end{itemize}
The statement $1'$ equivalently interprets the bulk reconstruction as the statement 1. Since $1\Leftrightarrow2$, then we are free to use the statement 2 when we assume that the statement 1 (or the statement $1'$) holds. Finally we prove that $(1\Leftrightarrow1'+ 2)\Rightarrow4\Rightarrow2$.

Before our proof, we argue several facts about the induced von Neumann factors $V\A_{code}V^\dagger,V\A'_{code}V^\dagger$ on $\im(V)$. Firstly, we argue that the tracial state $\ket{\tilde{\tau}}$ on $\A_{code}$ under isometry $V$ is a tracial state on $V\A_{code}V^\dagger$. We define
\begin{equation}
	\ket{\tau}\equiv V\ket{\tilde{\tau}}\in \im(V)
\end{equation} 
then for any $\O,\P\in V\A_{code}V^\dagger$, there exists $\widetilde{\O},\widetilde{\P}\in \A_{code}$ such that
\begin{equation}
	\O=V\widetilde{\O}V^\dagger,\quad \P=V\widetilde{\P}V^\dagger
\end{equation}
according to the definition of $V\A_{code}V^\dagger$. It implies that
\begin{equation}
	\begin{aligned}
		\diracprod{\tau}{\O\P}{\tau}&=\diracprod{\tilde{\tau}}{V^\dagger\O\P V}{\tilde{\tau}}\\
		&=\diracprod{\tilde{\tau}}{V^\dagger V\widetilde{\O}V^\dagger V\widetilde{\P}V^\dagger V}{\tilde{\tau}}\\
		&=\diracprod{\tilde{\tau}}{\widetilde{\O} \widetilde{\P}}{\tilde{\tau}}
	\end{aligned}
\end{equation}
where the third equality uses $V^\dagger V=\mathbb{I}$ on $\H_{code}$, then
\begin{equation}
	\diracprod{\tau}{\O\P}{\tau}=\diracprod{\tilde{\tau}}{\widetilde{\O} \widetilde{\P}}{\tilde{\tau}}=\diracprod{\tilde{\tau}}{\widetilde{\P}\widetilde{\O} }{\tilde{\tau}}=\diracprod{\tau}{\P\O}{\tau}
\end{equation}
so that $\ket{\tau}$ is tracial on $V\A_{code}V^\dagger$.

Secondly, we argue that if $\ket{\widetilde{\Psi}}\in\H_{code}$ is cyclic and separating w.r.t. $\A_{code}$, then $\ket{\Psi}\equiv V\ket{\widetilde{\Psi}}\in\im(V)$ is cyclic and separating w.r.t. $V\A_{code}V^\dagger$. To achieve that, we need to argue that $\ket{\Psi}$ possesses (a) cyclic property: $V\A_{code}V^\dagger\ket{\Psi}$ is dense in $\im(V)$; (b) separating property: $\O\ket{\Psi}=0$ implies $\O=0$ for $\O\in V\A_{code}V^\dagger$. For cyclic property, we have 
\begin{equation}
	V\A_{code}V^\dagger\ket{\Psi}=V\A_{code}V^\dagger V\ket{\widetilde{\Psi}}=V\A_{code}\ket{\widetilde{\Psi}}.
\end{equation}
Since $\ket{\widetilde{\Psi}}\in\H_{code}$ is cyclic w.r.t. $\A_{code}$, i.e. $\A_{code}\ket{\widetilde{\Psi}}$ is dense in $\H_{code}$, then we have $V\A_{code}\ket{\widetilde{\Psi}}$ is dense in $V(\H_{code})=\im(V)$. For separating property, we consider any $\O\in V\A_{code}V^\dagger$, there exists $\widetilde{\O}\in \A_{code}$ such that $\O=V\widetilde{\O}V^\dagger$. Now we suppose that $\O\ket{\Psi}=0$, and we have
\begin{equation}
	\O\ket{\Psi}=V\widetilde{\O}V^\dagger V\ket{\widetilde{\Psi}}=V\widetilde{\O}\ket{\widetilde{\Psi}}=0,
\end{equation}
and $\ket{\widetilde{\Psi}}\in\H_{code}$ is separating w.r.t. $\A_{code}$, i.e. $\widetilde{\O}\ket{\widetilde{\Psi}}=0$ implies $\widetilde{\O}=0$ for $\widetilde{\O}\in \A_{code}$. We must conclude that $\widetilde{\O}=0$ because $V\neq 0$, which further implies that $\O=V\widetilde{\O}V^\dagger=0$. These confirm our arguments that $\ket{\Psi}\in\im(V)$ is cyclic and separating w.r.t. $V\A_{code}V^\dagger$. Similar arguments for $\ket{\Psi}\in\im(V)$ to be cyclic and separating w.r.t. $V\A'_{code}V^\dagger$ also hold. Now we are ready for the main proof.

\begin{proof}

	\begin{itemize}
		\item $1\Rightarrow 1'$:
		
		If statement 1 holds, we have that for any $\widetilde{\O}\in\A_{code}$, there exists $\O\in \A_{phys}$ such that
		$V\widetilde{\O}\ket{\widetilde{\Psi}}=\O V\ket{\widetilde{\Psi}}$ for all $ \ket{\widetilde{\Psi}}\in\H_{code}$. Since we also have $V\widetilde{\O}\ket{\widetilde{\Psi}}=V\widetilde{\O}V^\dagger V\ket{\widetilde{\Psi}}=V\widetilde{\O}V^\dagger \ket{\Psi}$ and $\O V\ket{\widetilde{\Psi}}=\O\ket{\Psi}$ for all $\ket{\Psi}=V\ket{\widetilde{\Psi}}\in\im(V)$, then
		\begin{equation}
			V\widetilde{\O}\ket{\widetilde{\Psi}}=\O V\ket{\widetilde{\Psi}}\quad \Leftrightarrow \quad V\widetilde{\O}V^\dagger \ket{\Psi}=\O\ket{\Psi}.
		\end{equation}
		The last equation tells us that, on the LHS we have an operator in $V\A_{code} V^\dagger$ acting on a state in $\im(V)=V(\H_{code})$, which should result in a state in $\im(V)$. On the other hand, we have an operator in $\A_{phys}$ acting on a state in $\im(V)$ on the RHS. Due the equality of the two sides, we conclude that the RHS also gives a state in $\im(V)$. Since we choose $\ket{\Psi}$ to be arbitrary, we find $\im(V)$ is an invariant subspace for the operator $\O$ on the RHS, i.e. such $\O$ is an element in $\A_{phys}|_{\im(V)}$ which is a subalgebra of $\A_{phys}$ acting on the invariant subspace $\im(V)\subset \H_{phys}$.
		
		Also notice that the above argument holds for any $\widetilde{\O}\in \A_{code}$ and there is a one-to-one correspondence between $\A_{code}$ and $V\A_{code} V^\dagger$ via $\widetilde{\O}\in\A_{code}\leftrightarrow \widehat{O}=V\widetilde{\O}V^\dagger\in V\A_{code} V^\dagger$.
		Therefore, the statement 1 can be rewritten as follows: for any $\widehat{\O}\in V\A_{code} V^\dagger$, there exists $\O\in \A_{phys}|_{\im(V)}$ such that 
		\begin{equation}\label{eq:OhateqO}
			\widehat{\O}\ket{\Psi}=\O\ket{\Psi},\quad \forall \ket{\Psi}\in \im(V).
		\end{equation}

		Since \eqref{eq:OhateqO} holds for all states in $\im(V)$, $\widehat{\O}$ and $\O$ as bounded operators acting on $\im(V)$ must be identical.\footnote{Two bounded linear operators $A,B$ on a Hilbert space $\H$ are equal if and only if $A\ket{\psi}=B\ket{\psi}$ for all $\ket{\psi}\in\H$.} To conclude, the statement 1 implies that for any $\widehat{\O}\in V\A_{code} V^\dagger$, there always exists $\O\in \A_{phys}|_{\im(V)}$ such that $\widehat{\O}=\O$, which further implies that
		\begin{equation}
			V\A_{code} V^\dagger \subset \A_{phys}|_{\im(V)}.
		\end{equation}
		Likewise, we can prove that $V\A'_{code} V^\dagger \subset \A'_{phys}|_{\im(V)}$. 
		
		\item $1'\Rightarrow 1$:
		
		Trivial.
		
		\item $(1'+ 2)\Rightarrow4$:
		
		To deduce the equality between algebraic von Neumann entropies, we start with the statement 2 which states that 
		\begin{equation}\label{eq:Sreleq1}
			\Srel(\widetilde{\Psi}|\widetilde{\Phi};\A_{code})=\Srel(\Psi|\Phi;\A_{phys}).
		\end{equation}
		for any $\ket{\widetilde{\Psi}},\ket{\widetilde{\Phi}}\in\H_{code}$ with $\ket{\widetilde{\Psi}},\ket{\widetilde{\Phi}}$ both being cyclic and separating w.r.t. $\A_{code}$.
		
		Notice that on the RHS, we have
		\begin{equation}\label{eq:Sreleq2}
			\Srel(\Psi|\Phi;\A_{phys})=\Srel(\Psi|\Phi;V\A_{code} V^\dagger)
		\end{equation}
		in the case that $V\A_{code} V^\dagger \subset \A_{phys}|_{\im(V)}\subset\A_{phys}$, which is because (a) $\ket{\Psi}=V\ket{\widetilde{\Psi}}$ is cyclic and separating w.r.t. $V\A_{code}V^\dagger$ such that $\Srel(\Psi|\Phi;V\A_{code} V^\dagger)$ is well-defined; (b) both sides have the same expression $\diracprod{V\widetilde{\Psi}}{h_{V\widetilde{\Phi}|V\widetilde{\Psi}}}{V\widetilde{\Psi}}$ with $h_{V\widetilde{\Phi}|V\widetilde{\Psi}}$ on $V\A_{code} V^\dagger$ is well-defined as $h_{V\widetilde{\Phi}|V\widetilde{\Psi}}$ on $\A_{phys}$ restricted in $V\A_{code} V^\dagger \subset \A_{phys}|_{\im(V)}\subset\A_{phys}$. Now combining \eqref{eq:Sreleq1} and \eqref{eq:Sreleq2}, we have
		\begin{equation}
			\Srel(\widetilde{\Psi}|\widetilde{\Phi};\A_{code})=\Srel(\Psi|\Phi;V\A_{code} V^\dagger).
		\end{equation}
		Setting $\ket{\widetilde{\Phi}}=\ket{\widetilde{\tau}}$ with $\ket{\widetilde{\tau}}$ being the tracial state w.r.t. $\A_{code}$, and according to our first argument before the main proof which states that $\ket{\tau}\equiv V\ket{\tilde{\tau}}$ is the tracial state w.r.t. $V\A_{code} V^\dagger$, we arrive at
		\begin{equation}\label{eq:Sreleq3}
			S(\widetilde{\Psi};\A_{code})=S(\Psi;V\A_{code} V^\dagger)
		\end{equation}
		from the definition of the algebraic von Neumann entropy \eqref{eq:algvnentropy}. The explicit expression of the RHS is
		\begin{equation}
			S(\Psi;V\A_{code} V^\dagger)=-\diracprod{\Psi}{h_{\tau|\Psi}}{\Psi}=-\diracprod{\Psi}{\log \rho_{\Psi;V\A_{code} V^\dagger}}{\Psi}
		\end{equation} 
		where the second equality is due to \eqref{eq:htau}. To proceed, we notice that in the case of $V\A_{code} V^\dagger \subset\A_{phys}$, the density operator $\rho_{\Psi;V\A_{code} V^\dagger}\in V\A_{code} V^\dagger$ is also a density operator in $\A_{phys}$, i.e.
		\begin{equation}
			\rho_{\Psi;V\A_{code} V^\dagger}=\rho_{\Psi;\A_{phys}}.
		\end{equation}

		Formally we can confirm it as follows. In the case that $V\A_{code} V^\dagger \subset\A_{phys}$, the traces are related by
		\begin{equation}
			\begin{aligned}
				&\Tr_{V\A_{code}V^\dagger}=\Tr_{\A_{phys}}\big|_{V\A_{code}V^\dagger}\\
				\text{or}\quad &\Tr_{V\A_{code}V^\dagger}(\O)=\Tr_{\A_{phys}}(\O),~\forall \O\in V\A_{code}V^\dagger
			\end{aligned}
		\end{equation}
		due to the uniqueness of the trace function in $V\A_{code} V^\dagger$. Then for any operator $\O\in V\A_{code} V^\dagger $, we have
		\begin{equation}
			\Tr_{V\A_{code}V^\dagger}(\rho_{\Psi;V\A_{code} V^\dagger}\O)=\Tr_{\A_{phys}}(\rho_{\Psi;V\A_{code} V^\dagger}\O).
		\end{equation}
		On the other hand, we also have
		\begin{equation}
			\Tr_{V\A_{code}V^\dagger}(\rho_{\Psi;V\A_{code} V^\dagger}\O)=\diracprod{\Psi}{\O}{\Psi}=\Tr_{\A_{phys}}(\rho_{\Psi;\A_{phys}}\O)
		\end{equation}
		where the two equalities are both due to \eqref{eq:diff def of state} while the second one regards $\O$ as an element in $\A_{phys}$. Now the last two equations tell us that $\forall\O\in V\A_{code} V^\dagger\subset\A_{phys}$:
		\begin{equation}
			\Tr_{\A_{phys}}(\rho_{\Psi;V\A_{code} V^\dagger}\O)=\Tr_{\A_{phys}}(\rho_{\Psi;\A_{phys}}\O).
		\end{equation}
		Physically speaking, the two density operators $\rho_{\Psi;V\A_{code} V^\dagger}$ and $\rho_{\Psi;\A_{phys}}$ are experimentally indistinguishable for observables in $V\A_{code} V^\dagger\subset\A_{phys}$ acting on $\im(V)$. What follows is that
		\begin{equation}
			\begin{aligned}
				S(\Psi;V\A_{code} V^\dagger)&=-\diracprod{\Psi}{\log \rho_{\Psi;V\A_{code} V^\dagger}}{\Psi}\\
				&=-\diracprod{\Psi}{\log \rho_{\Psi;\A_{phys}}}{\Psi}\\
				&=S(\Psi;\A_{phys})
			\end{aligned}
		\end{equation}
		Combined with \eqref{eq:Sreleq3} we arrive at
		\begin{equation}
			S(\widetilde{\Psi};\A_{code})=S(\Psi;\A_{phys})
		\end{equation}
		
		Likewise for $S(\widetilde{\Psi};\A'_{code})=S(\Psi;\A'_{phys})$.

		\item $4\Rightarrow2$: 
		
		Since we are dealing with factors of type I/II, the splittable condition of modular Hamiltonian \eqref{eq:Hamiltonian splittable infin} holds,
		\begin{equation}
			\begin{aligned}
				&\Delta_{\widetilde{\Phi}|\widetilde{\Psi}}=\rho_{\widetilde{\Phi}}\otimes\rho'^{-1}_{\widetilde{\Psi}}=\Delta^{-1}_{\widetilde{\tau}|\widetilde{\Phi}}\otimes \Delta_{\widetilde{\tau}|\widetilde{\Psi}}'\\
				\Rightarrow\quad &h_{\widetilde{\Phi}|\widetilde{\Psi}}=-\log\Delta_{\widetilde{\Phi}|\widetilde{\Psi}}=h_{\widetilde{\tau}|\widetilde{\Psi}}'-h_{\widetilde{\tau}|\widetilde{\Phi}},
			\end{aligned}
		\end{equation} 
		where $\ket{\widetilde{\tau}}$ is tracial w.r.t. $\A_{code}$, which implies
		\begin{equation}
			\Srel(\widetilde{\Psi}|\widetilde{\Phi};\A_{code})=-S(\widetilde{\Psi};\A'_{code})-\diracprod{\widetilde{\Psi}}{h_{\widetilde{\tau}|\widetilde{\Phi}}}{\widetilde{\Psi}}.
		\end{equation} 
		Similarly,
		\begin{equation}
			\Srel(\Psi|\Phi;\A_{phys})=-S(\Psi;\A'_{phys})-\diracprod{\Psi}{h_{\bar{\tau }|\Phi}}{\Psi},
		\end{equation}
		where $\ket{\bar{\tau }}$ is tracial w.r.t. $\A_{phys}$. Subtracting the above two equations implies
		\begin{equation}
			\begin{aligned}
				&\Srel(\widetilde{\Psi}|\widetilde{\Phi};\A_{code})-\Srel(\Psi|\Phi;\A_{phys})\\
				&=S(\Psi;\A'_{phys})-S(\widetilde{\Psi};\A'_{code})\\
				&+\diracprod{\Psi}{h_{\bar{\tau }|\Phi}}{\Psi}-\diracprod{\widetilde{\Psi}}{h_{\widetilde{\tau}|\widetilde{\Phi}}}{\widetilde{\Psi}}.
			\end{aligned}
		\end{equation} 
		We find that if
		\begin{equation}
			\begin{aligned}
				&\diracprod{\Psi}{h_{\bar{\tau }|\Phi}}{\Psi}-\diracprod{\widetilde{\Psi}}{h_{\widetilde{\tau}|\widetilde{\Phi}}}{\widetilde{\Psi}}\\
				&=\diracprod{\widetilde{\Psi} }{\left(V^\dagger h_{\bar{\tau }|\Phi}V-h_{\widetilde{\tau}|\widetilde{\Phi}}\right)}{\widetilde{\Psi}}=0,
			\end{aligned}
		\end{equation}
		then we have
		\begin{equation}
			\begin{aligned}
				&S(\Psi;\A'_{phys})=S(\widetilde{\Psi};\A'_{code})\\ \Rightarrow\quad &\Srel(\widetilde{\Psi}|\widetilde{\Phi};\A_{code})=\Srel(\Psi|\Phi;\A_{phys}).
			\end{aligned}
		\end{equation}
		In fact, we can prove a stronger condition that
		\begin{equation}
			V^\dagger h_{\bar{\tau }|\Phi}V=h_{\widetilde{\tau}|\widetilde{\Phi}}.
		\end{equation}
		when the statement 4 holds. First, we consider the variation of state:
		\begin{equation}
			\omega_{\widetilde{\Psi}}=\omega_{\widetilde{\Phi}}+\varepsilon(\omega_{\widetilde{\Theta}}-\omega_{\widetilde{\Phi}})
		\end{equation}
		following \eqref{eq:op variation}, then the algebraic entropy difference for $\A_{code}$ is given by
		\begin{equation}\label{eq:delta Scode}
			\begin{aligned}
				\Delta S(\widetilde{\Psi},\widetilde{\Phi};\A_{code})&\equiv S(\widetilde{\Psi};\A_{code})-S(\widetilde{\Phi};\A_{code})\\
				&=\varepsilon(\omega_{\widetilde{\Theta}}-\omega_{\widetilde{\Phi}})(h_{\widetilde{\tau}|\widetilde{\Phi}})\\
				&=\varepsilon~\omega_{\widetilde{\Theta}}(h_{\widetilde{\tau}|\widetilde{\Phi}})+\varepsilon S(\widetilde{\Phi};\A_{code}).
			\end{aligned}
		\end{equation}
		following \eqref{eq:entropy diff2} and \eqref{eq:algvnentropy2}. Second, we find that $\omega_{\Psi}\equiv\omega_{V\widetilde{\Psi}}$ satisfies
		\begin{equation}\label{eq:omegaV}
			\omega_{\Psi}(\O)=\diracprod{V\widetilde{\Psi}}{\O}{V\widetilde{\Psi}}=\diracprod{\widetilde{\Psi}}{V^\dagger\O V}{\widetilde{\Psi}}=\omega_{\widetilde{\Psi}}(V^\dagger\O V)
		\end{equation}
		for any operator $\O$ (i.e. not necessarily in $V\A_{code}V^\dagger$) acting on $\H_{phys}$ after the extension of the domain of $\omega_{\widetilde{\Psi}}$, as discussed in subsection \ref{sec:extended def}. Therefore, we find that
		\begin{equation}
			\begin{aligned}
				\omega_{\Psi}(\O)&=\omega_{\widetilde{\Psi}}(V^\dagger\O V)\\
				&=\left[\omega_{\widetilde{\Phi}}+\varepsilon(\omega_{\widetilde{\Theta}}-\omega_{\widetilde{\Phi}})\right](V^\dagger\O V)\\
				&=\omega_{\widetilde{\Phi}}(V^\dagger\O V)+\varepsilon(\omega_{\widetilde{\Theta}}(V^\dagger\O V)-\omega_{\widetilde{\Phi}}(V^\dagger\O V))\\
				&=\omega_{\Phi}(\O )+\varepsilon(\omega_{\Theta}(\O )-\omega_{\Phi}(\O ))\\
				&=\left[\omega_{\Phi}+\varepsilon(\omega_{\Theta}-\omega_{\Phi})\right](\O )
			\end{aligned}
		\end{equation}
		i.e. we have
		\begin{equation}
			\omega_{\Psi}=\omega_{\Phi}+\varepsilon(\omega_{\Theta}-\omega_{\Phi})
		\end{equation}
		such that
		the algebraic entropy difference for $\A_{phys}$ is given by
		\begin{equation}\label{eq:delta Sphys}
			\begin{aligned}
				\Delta S(\Psi,\Phi;\A_{phys})&\equiv S(\Psi;\A_{phys})-S(\Phi;\A_{phys})\\
				&=\varepsilon(\omega_{\Theta}-\omega_{\Phi})(h_{\bar{\tau}|\Phi})\\
				&=\varepsilon~\omega_{\Theta}(h_{\bar{\tau}|\Phi})+\varepsilon S(\Phi;\A_{phys}).
			\end{aligned}
		\end{equation}
		Now combining the statement 4 \eqref{eq:statement 4} with \eqref{eq:delta Scode}, \eqref{eq:delta Sphys} and \eqref{eq:omegaV}, we find that
		\begin{equation}
			\varepsilon~\omega_{\widetilde{\Theta}}(h_{\widetilde{\tau}|\widetilde{\Phi}})=\varepsilon~\omega_{\Theta}(h_{\bar{\tau}|\Phi})=\varepsilon~\omega_{\widetilde{\Theta}}(V^\dagger h_{\bar{\tau}|\Phi}V)
		\end{equation}
		Since $\varepsilon$ is arbitrary infinitesimal and $\omega_{\widetilde{\Theta}}$ is an arbitrary state functional, we conclude that
		\begin{equation}
			V^\dagger h_{\bar{\tau }|\Phi}V=h_{\widetilde{\tau}|\widetilde{\Phi}}.
		\end{equation}
		Likewise, we can argue in the same way that
		\begin{equation}
			\Srel(\widetilde{\Psi}|\widetilde{\Phi};\A'_{code})=\Srel(\Psi|\Phi;\A'_{phys})
		\end{equation}
		which completes the proof.

	\end{itemize}
	
\end{proof}

\section{Discussion}\label{sec:discussions}
As mentioned in footnote \ref{fn:rescale}, if the trace on $\A_{phys}$ differs the trace associated with $\ket{\tau}$ by a rescaling, an extra constant appears and shifts the algebraic von Neumann entropy on $\A_{phys}$. Naively, it seems like the extra constant coincides with the area term in the RT formula, 
\begin{equation}
	S_{bdy}=\mathcal{L}+S_{bulk}
\end{equation}
which however is not the case. Unlike the shifting constant can be manually chosen to be as small as we want, the area term diverges. Furthermore, it diverges in the same way as the boundary von Neumann entropy does. If we rewrite the RT formula as
\begin{equation}
	S_{bdy}-\mathcal{L}=S_{bulk}
\end{equation}
then the leading divergence on the LHS cancels. Since recent investigations have argued for a algebraic way to regulate $S_{bulk}$ \cite{Witten:2021unn,Chandrasekaran:2022cip,Jensen:2023yxy,Kudler-Flam:2023hkl,Colafranceschi:2023urj,Faulkner:2024gst}, which implies that $S_{bdy}-\mathcal{L}$ should be regarded as the outcome after algebraic regulation, then we should have
\begin{equation}
	S_{bulk}=S(\tilde{\Psi};\A_{code}),\quad S_{bdy}-\mathcal{L}=S(\Psi;\A_{phys}).
\end{equation}
One future direction is to investigate how the area term $\mathcal{L}$ emerges in the algebraic regulation of $S_{bdy}$.

Another future direction that is worth pursuing is that whether a modified version of the algebraic RT formula is plausible in a general type factor. In \cite{Kudler-Flam:2023hkl}, the authors have argued that it is possible to rigorously define the difference of von Neumann entropy in a general factor, in the sense that the divergence of the algebraic von Neumann entropy cancels after subtraction in type III factors. In this case, it seems that we may promote the statement of algebraic RT formula in the algebraic reconstruction to a new statement regarding the generalized version of the difference of algebraic von Neumann entropy.

We also emphasize that for the AdS/CFT, the theorems in section \ref{sec:rec} only holds in $N\rightarrow\infty$ limit, in which case the quantum error correction is exact \cite{Cotler:2017erl}. For large but finite $N$, the quantum error correction in the theorems is approximate, e.g. \eqref{eq:JLMS} receives $O(1/N)$ corrections \cite{Cotler:2017erl,Jafferis:2015del} and the relations in the other statements should only approximately hold. How such corrections modify the algebraic RT formula is also worth further investigation.

\acknowledgments

We thank Yixu Wang, Antony J. Speranza, Qiang Wen, Yiwei Zhong for helpful discussions. The authors thank the Shing-Tung Yau Center of Southeast University for support. HZ is supported by SEU Innovation Capability Enhancement Plan for Doctoral Students (Grant No.CXJH\_SEU 24137).

\bibliographystyle{JHEP}
\bibliography{bib}
\end{document}